\def\version{September 12, 2013}
\DeclareOldFontCommand{\brianup}{\upshape}{\mathrm}
\DeclareSymbolFont{EUR}{U}{eur}{m}{n}
\DeclareSymbolFontAlphabet{\eur}{EUR}
\DeclareSymbolFont{EUB}{U}{eur}{b}{n}
\DeclareSymbolFontAlphabet{\eub}{EUB}
\DeclareSymbolFont{AMSb}{U}{msb}{m}{n}
\DeclareSymbolFontAlphabet{\mathbb}{AMSb}
\newcommand\charge{ \mathrm{e} }
\newcommand\eubD{\eub{D}}
\newcommand\eubJ{\eub{J}}
\newcommand\eubL{\eub{L}}
\newcommand\bmupalpha{\bm\upalpha}
\newcommand\bmupbeta{\bm\upbeta}
\newcommand\bmupphi{\bm\upphi}
\newcommand\bmupxi{\bm\upxi}
\newcommand{\notyet}[1]{{}}
\newcommand{\p}{\partial}
\def\R{\mathbb{R}}
\newcommand{\C}{\mathbb{C}}
\newcommand{\N}{\mathbb{N}}\newcommand{\Z}{\mathbb{Z}}
\newcommand{\abs}[1]{\vert #1 \vert}
\newcommand{\norm}[1]{\Vert #1 \Vert}
\newcommand{\sothat}{{\rm ;}\ }
\DeclareMathSymbol{\varGamma}{\mathord}{letters}{"00}
\DeclareMathSymbol{\varDelta}{\mathord}{letters}{"01}
\DeclareMathSymbol{\varTheta}{\mathord}{letters}{"02}
\DeclareMathSymbol{\varLambda}{\mathord}{letters}{"03}
\DeclareMathSymbol{\varXi}{\mathord}{letters}{"04}
\DeclareMathSymbol{\varPi}{\mathord}{letters}{"05}
\DeclareMathSymbol{\varSigma}{\mathord}{letters}{"06}
\DeclareMathSymbol{\varUpsilon}{\mathord}{letters}{"07}
\DeclareMathSymbol{\varPhi}{\mathord}{letters}{"08}
\DeclareMathSymbol{\varPsi}{\mathord}{letters}{"09}
\DeclareMathSymbol{\varOmega}{\mathord}{letters}{"0A}
\theoremstyle{plain}
\newtheorem{lemma}{Lemma}[section]
\newtheorem{theorem}[lemma]{Theorem}
\theoremstyle{definition}
\theoremstyle{remark}
\newtheorem{remark}[lemma]{Remark}
\makeatletter\@addtoreset{equation}{section}
\def\spec{\sigma}
\renewcommand{\Re}{\mathop{\rm{R\hskip -1pt e}}\nolimits}
\renewcommand{\Im}{\mathop{\rm{I\hskip -1pt m}}\nolimits}
\begin{document}

\title{Polarons as stable solitary wave solutions to the Dirac--Coulomb system}

\author{
{\sc Andrew Comech}
\\
{\small\it Texas A\&M University, College Station, TX 77843, U.S.A.}
\\
{\small\it Institute for Information Transmission Problems, Moscow 101447, Russia}
\\
{\sc Mikhail Zubkov}
\\
{\small \it University of Western Ontario,  London, ON N6A 5B7, Canada}
\\
{\small\it Institute for Theoretical and Experimental Physics, Moscow 117259, Russia}
}

\date{\version}

\begin{abstract}
We consider solitary wave solutions to the Dirac--Coulomb system
both from physical and mathematical points of view.
Fermions interacting with gravity in the Newtonian limit
are described by the model of Dirac fermions with the Coulomb attraction.
This model also appears in certain condensed matter
systems with emergent Dirac fermions interacting via optical phonons.
In this model, the classical soliton solutions
of equations of motion describe the physical objects
that may be called polarons,
in analogy to the  solutions of the Choquard equation.
We develop analytical methods
for the Dirac--Coulomb system,
showing that the no-node gap solitons
for sufficiently small values of charge are linearly (spectrally) stable.
\end{abstract}


\maketitle

\section{Introduction}
\label{sectintr}

The Dirac--Maxwell system and other models of fermion fields with self-interaction
(such as the massive Thirring model \cite{MR0091788} and the Soler
model \cite{PhysRevD.1.2766}) have been attracting the interest of both
physicists and mathematicians for many years.
These models,
just like the nonlinear Schr\"odinger equation,
have localized solitary wave solutions
of the form
$\phi(x)e^{-i\omega t}$,
where $\phi(x)$ is exponentially localized in space.
For the Dirac--Maxwell system the
localized solutions
with $\omega\in(-m,m)$
have been shown to exist (first numerically and then analytically)
in
\cite{wakano-1966,MR1364144,MR1386737,MR1618672}.
These solutions may encode certain properties of the theory
which can not be obtained via perturbative analysis.
The role of these classical solutions
in high energy physics
has long been the topic of intense discussion (see, for example,
\cite{PhysRevD.10.517,PhysRevD.10.3235,PhysRevD.12.3880,PhysRevD.14.535,
PhysRevD.29.985,PhysRevD.31.2701}).
It seems that there is no physical meaning
of such solitary waves in quantum electrodynamics.
These classical localized states are formed
due to the attraction of the spinor field to itself,
which takes place
at the energies $\omega\gtrsim -m$.
On the other hand,
the associated quantum field theory
admits the appearance of antiparticles.
To describe antiparticles, it is necessary to change the order
of the fermion creation-annihilation operators.
This results in the additional change of sign at the scalar potential.
Thus, in the quantum theory for $\omega\gtrsim -m$
instead of the self-attraction (which could lead to the
creation of a localized mode)
one again ends up with the self-repulsion in the antiparticle sector.
This anticommuting nature of fermion variables
is ignored in the classical Dirac--Maxwell system.

\medskip

In the present paper we make an attempt
to determine how the solitary waves
of the classical equations of motion
could play a role in the quantum field theory.
We show that the Dirac equation with the Coulomb attraction
emerges in the semiclassical description of fermions
interacting with optical phonons or with the gravitational field.

Let us give more specifics.
It is well known that the relativistic Dirac fermions may emerge in the
condensed matter systems.
This occurs, for example,
on the boundary of the 3D topological insulators
and in some two-dimensional structures like graphene
\cite{nature-438-197,2011arXiv1111.3017F}.
Moreover, massless Dirac fermions appear in
3D materials
(see, for example, \cite{2012PhRvL.108n0405Y} and references
therein) at the phase transition between a 
topological and a normal insulator. Recently, the existence of massless
fermions at the phase transition between the insulator states with different
values of topological invariants has been proven
for a wide class of relativistic models \cite{MR2911335}.
When the interaction with other fields
is taken into account,
these massless fermions gain the mass.
Fermion excitations in various materials
interact with phonons \cite{MR766230}.
As a result, the attractive interaction between the fermions appears. Such an interaction
gives rise to the formation of Cooper pairs
in microscopic theories of superconductivity \cite{MR766230,schmidt-1997}.
In the case of the exchange by optical phonons
\cite{1990AnPhy.202..320C} the interaction has the form of the Coulomb attraction.
This leads to the formation of the polaron,
as in the nonrelativistic Landau--Pekar approach \cite{polaron-1933,polaron-1946}.

While in the above model
the Lorentz symmetry is broken due to the Coulomb forces, 
the relativistically-invariant version of a similar system
is given by the Dirac fermions interacting with the gravitational field.	
This problem may be thought of as a true relativistic polaron
problem.
We show that
in the Newtonian limit we again arrive at the system of Dirac fermions
interacting via Coulomb-like forces.
Such polarons may emerge in the unified theories.

It is worth mentioning that Positronium
(the bound state of electron-positron pair)
has nothing to do with the solitary waves discussed in the present paper.
Positronium can be described
by the Dirac equation in the external Coulomb field.
This is in contrast to our solitary waves,
which are described by the Dirac equation
in the potential created by the spinor field itself.

\medskip

In the second part of the paper
we analyze the stability of solitary wave solutions
in the Dirac--Coulomb system.
We show that certain solitary waves are \emph{linearly stable}, i.e.
the spectrum of the equation linearized at a particular solitary wave
has no eigenvalues with positive real part.
Our approach is based on the fact that the nonrelativistic limit
of the Dirac--Coulomb system is the Choquard equation.
In particular, the solitary wave solutions
to the Dirac--Coulomb system are obtained as a bifurcation
from the solitary waves of the Choquard equation.
(It is worth mentioning that the Choquard equation
appears in the conventional
polaron problem \cite{polaron-1933,polaron-1946}.)
It also follows that
the eigenvalue families of
the Dirac--Coulomb system linearized at a solitary wave
are deformations of eigenvalue families
corresponding to the Choquard equation
(this has been rigorously proved in \cite{dirac-spectrum}
in the context of nonlinear Dirac equations).
The latter could be analyzed
via the Vakhitov--Kolokolov stability criterion \cite{VaKo}.
The delicate part is the absence of bifurcations
of eigenvalues from the continuous spectrum.
We explain that this follows from the limiting absorption principle
for the free Dirac operator.

We emphasize that
we present one of the first results
for the linear stability of spatially localized fermion modes.
Prior attempts at their stability properties
included the analysis of
stability of Dirac solitary waves with respect to
particular families of perturbations
(such as dilations),
see e.g. 
\cite{MR592382,PhysRevD.36.2422,PhysRevE.82.036604,MR848095}
and related numerical results
\cite{MR637021,MR710348,PhysRevLett.50.1230,chugunova-thesis}.
Yet neither the linear stability
nor orbital nor asymptotic stability
were understood.
Our latest results
on linear stability and instability
for the nonlinear Dirac equation are in
\cite{
MR2892774,dirac-nd}.
There are also recent results
on asymptotic stability
of solitary wave solutions
to the nonlinear Dirac equation
in 1D and in 3D
\cite{MR2985264,MR2924465}
(proved under the assumption that a particular solitary wave is linearly stable).
Interestingly, the orbital stability
has been proven
for small amplitude solitary waves
in the completely integrable massive Thirring model \cite{1304.1748v1_2}.
It should be mentioned that in contrast to those of the nonlinear Dirac equation, 
the questions of linear, orbital,
and even asymptotic stability of solitary waves
in the nonlinear Schr\"odinger equation
are essentially settled
(see e.g. \cite{VaKo,MR901236,MR1334139,MR2231327}).

We would also like to mention that		
Einstein-Dirac equations		
were shown to have		
particle-like solutions
which are linearly stable
with respect to spherically symmetric perturbations
\cite{PhysRevD.59.104020}.

\medskip

The paper is organized as follows.
In Section~\ref{sectdirac}, we briefly describe the
model under investigation.
In Section~\ref{sectquant} we
discuss the appearance of the considered solitary waves
in a system of Dirac fermions
interacting with optical phonons.
In Section~\ref{sectgrav} we
consider the relation of solitary waves to the gravitational polaron.
In Section~\ref{sect-stability}
we sketch the proof of existence of solitary waves
and then address the question of their linear stability.
Our conclusions are in Section~\ref{sectconcl}.	
In the Appendix we give the details		
of the Vakhitov--Kolokolov stability criterion \cite{VaKo}
in its application to the Choquard equation.

\section{Dirac--Coulomb system}
\label{sectdirac}

Below, we choose the units so that
$\hbar=c=1$.
We consider the model with the action
\begin{equation}\label{gact}
S_E = \int\!\!d^3x\,dt\,\bar\zeta(i D\sb{\varphi}-m)\zeta
-
\int\!\!d^3x\,dt\,
\frac{(\bm\nabla\varphi)^2}{2}
,
\end{equation}
where the Dirac fermion field
interacts with itself via an instantaneous Coulomb
interaction,
$x\in\R^3$, $t\in\R$,
$\charge^2$ is the coupling constant, $\zeta(x,t)\in\C^4$
is a four-component Dirac field,
$\varphi(x,t)\in\R$ is the Coulomb field,
and
$$
D\sb{\varphi} =\gamma^0(\p_0+i \charge\varphi)+\sum\sb{j=1}\sp{3}\gamma\sp j\p\sb j,
$$
where
$\p\sb 0=\frac{\p}{\p t}$,
$\p\sb j=\frac{\p}{\p x\sp j}$, $1\le j\le 3$.
The Dirac matrices $\gamma\sp\mu$, $0\le\mu\le 3$, satisfy the
Euclidean--Clifford algebra $\{\gamma\sp\mu,\gamma\sp\nu\}=2 g\sp{\mu\nu}$,
with
$g\sp{\mu\nu}$
the inverse of the metric tensor
$g\sb{\mu\nu}=\mathop{\rm diag}[1,-1,-1,-1]$.
Above, $\charge$ is the charge of the spinor field,
$\varphi(x,t)$ is the (real-valued) external scalar field
(such as the potential of the electric field in $\R^3$).
We denote
$(\bm\nabla\varphi)^2=\sum\sb{j=1}\sp{3}(\p\sb j\varphi)^2$.
It is worth
mentioning that in this model the Lorentz symmetry is broken due to
the Coulomb forces.

The dynamical equations
corresponding to \eqref{gact}
are given by the following
Dirac--Coulomb system:
\begin{equation}\label{2dd-0}
\left\{
\begin{array}{l}
i\p\sb{t}\zeta
=-i\bm\alpha\!\cdot\!\bm\nabla
\zeta+m\beta\zeta
+\charge\varphi\zeta,
\\
\Delta \varphi=\charge\zeta\sp\ast\zeta,
\end{array}
\right.
\end{equation}
where
$\zeta(x,t)\in\C^{4}$,
$\varphi(x,t)\in\R$,
$x\in\R^3$,
and $\Delta=\sum\sb{j=1}\sp{3}\p\sb{x\sb j}^2$.
Above,
$\bm\alpha=(\alpha\sp 1,\alpha\sp 2,\alpha\sp 3)$,
where the self-adjoint Dirac matrices
$\alpha\sp j$ and $\beta$
are related to $\gamma\sp\mu$
by
$$
\gamma\sp j=\gamma\sp 0\alpha\sp j,
\quad
1\le j\le 3;
\qquad
\gamma\sp 0=\beta.
$$
They satisfy
$
(\alpha\sp j)^2=\beta^2=I\sb{4},
$
$
\alpha\sp j\alpha\sp k+\alpha\sp k\alpha\sp j=2 I\sb{4}\delta\sb{jk},
$
$
\alpha\sp j\beta+\beta\alpha\sp j=0;
$
$1\le j,k\le 3$.
Above, $\bar\zeta=(\beta\zeta)\sp\ast=\zeta\sp\ast\beta$,
with
$\zeta\sp\ast$
the hermitian conjugate of $\zeta$.
A particular choice of the Dirac matrices
does not matter;
we take the Dirac matrices in the common form
\begin{equation}
\alpha\sp j
=\begin{pmatrix}0&\sigma_j\\\sigma_j&0\end{pmatrix},
\qquad
\beta
=\begin{pmatrix}I_2&0\\0&-I_2\end{pmatrix},
\end{equation}
where $I\sb{2}$ is the $2\times 2$ unit matrix
and
$\sigma\sb{1}=\left(\begin{matrix} 0&1 \\ 1&0\end{matrix}\right)$,
$\sigma\sb{2}=\left(\begin{matrix} 0&-i \\ i&0\end{matrix}\right)$,
$\sigma\sb{3}=\left(\begin{matrix} 1&0 \\ 0&-1\end{matrix}\right)$
are the Pauli matrices.

\begin{remark}\label{remar-signs}
Note that according to the second equation in \eqref{2dd-0},
if $\varphi\to 0$ at infinity, then
$\charge\varphi$ is strictly negative
and behaves like an attractive Coulomb potential
in the first equation in \eqref{2dd-0}
(for energies near $m$),
leading to the existence of bound states
for $\omega\lesssim m$.
\end{remark}

\medskip

\section{Polarons
due to interaction of fermion field with optical phonons}
\label{sectquant}


\subsection{Field-theoretical description of the generalized Frohlich model}

It was mentioned in the introduction that Dirac fermions may emerge in various
3D systems at the phase transition between the insulating states with
different values of momentum space topological invariants. Similarly to the
ordinary electrons in crystals, these Dirac fermions may interact with optical
phonons,
thus giving rise to the polaron problem. The model Hamiltonian
for a Dirac particle interacting with optical phonons can be obtained via the
generalization of the conventional Frohlich Hamiltonian
\cite{froelich-1954, polaron_lectures, 1990AnPhy.202..320C, polaron-1933, polaron-1946}:
\begin{equation}
{\cal H}=
\sum\sb{j=1}\sp{3}
\sum_p c^+_p [\gamma^0 \gamma^j \hat{\bf p}_j+ m \gamma^0]c_p
+
\sum_{k,p} [i \tilde{\alpha} c^+_{p+k}\frac{1}{|{\bf k}|} \hat{a}_k c_p + H.C.]
+ \Omega \sum_k \hat{a}^+_k \hat{a}_k.
\label{frohlich}
\end{equation}
Here $\tilde{\alpha}$ and $\Omega$ are coupling constants, $c^+_k$ are
the electron creation operators,
and $\hat{a}^+_k$ are the phonon creation operators.
We introduce the phonon field
$\varphi(x) = i \sqrt{2\Omega} \sum_k \frac{1}{|k|} \hat{a}_k
e^{ikx}+H.C.$ and the electron field $\psi(x) = \sum_p c_p e^{i p x}$.

We express ${\rm Tr}\, e^{-i {\cal H} T}$, where $T$ is time, as a functional
integral.
This is done as follows. First, we subdivide the time interval $[0,T]$
into
smaller intervals $\Delta T$ and represent $e^{-i {\cal H} T} = e^{-i {\cal H}
\Delta T}\dots e^{-i {\cal H} \Delta T}$. Next, we substitute
$$
1 = \frac{1}{2\pi
i}\int\!\!d \eta\,d \bar{\eta}\,d \zeta\,d\bar{\zeta}\,e^{-\bar{\zeta}\zeta -
\bar{\eta}\eta} \,|\eta\,\zeta\rangle \langle \eta\,\zeta|,
$$
where the coherent
states are
$|\eta\,\zeta\rangle = e^{\eta c^+ + \zeta \hat{a}^+}|0\rangle$, $\zeta\in\C$,
and $\eta$ is the Grassmann variable.
The functional integral over $\zeta,
\eta$ appears. Then, the integration variables $\varphi$ and $\psi$ are
introduced as a result of the action of the corresponding operator fields on
the coherent states. Therefore, the Frohlich Hamiltonian gives rise to the quantum
field theory of the interacting fermion and phonon fields,
with the partition function given by
%
\begin{equation}
Z = \int\!\!d \bar{\psi}\,d \psi\,d \varphi\, {\rm exp}\Bigl(i\!\!\int\!\!d^3x\,dt\,
\bar{\psi} [i \slashed{\p}-m-\charge\gamma^0 \varphi\,] \psi
+
i\!\!\int\!\!d^3x\,dt\,
\Big[
\frac{(\bm\nabla\dot{\varphi})^2}{2\Omega^2}
-
\frac{(\bm\nabla\varphi)^2}{2}
\Big]
\Bigr),
\label{ftfrohlich}
\end{equation}
%
where
$\slashed{\p}=\gamma\sp\mu\p\sb\mu$,
with the summation over $\mu=0,\,\dots,\,3$;
$
(\bm\nabla\varphi)^2
= \sum_{j=1}\sp{3}(\frac{\p \varphi}{\p x\sb j})^2.
$
Above,
we denote $\charge = \tilde{\alpha}\sqrt{\frac{1}{2\Omega}}$.
We formalize this as follows:
\begin{lemma}
The quantum-mechanical system with the Frohlich Hamiltonian (\ref{frohlich})
is equivalent to the field theory with the partition function (\ref{ftfrohlich}).
\end{lemma}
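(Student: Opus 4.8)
The plan is to carry out the standard passage from a second-quantized Hamiltonian to its coherent-state functional integral, the only new features being the Dirac (matrix) structure of the one-body term and the relativistic-type kinetic term produced for the phonon field. First I would write $\mathrm{Tr}\, e^{-i\mathcal{H}T} = \lim_{N\to\infty}\mathrm{Tr}\bigl(e^{-i\mathcal{H}\Delta T}\bigr)^{N}$ with $\Delta T = T/N$ and insert, between each pair of consecutive factors, one copy per momentum mode of the resolution of identity $1 = \frac{1}{2\pi i}\int d\eta\, d\bar\eta\, d\zeta\, d\bar\zeta\, e^{-\bar\zeta\zeta-\bar\eta\eta}\,|\eta\,\zeta\rangle\langle\eta\,\zeta|$ with $|\eta\,\zeta\rangle = e^{\eta c^{+}+\zeta\hat a^{+}}|0\rangle$. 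Since the Hamiltonian \eqref{frohlich} is already normal-ordered, the short-time matrix element factorizes to leading order in $\Delta T$ as $\langle\eta'\,\zeta'|\eta\,\zeta\rangle\exp(-i\Delta T\,\mathcal{H}(\bar\eta',\eta;\bar\zeta',\zeta))$, and combining the overlaps $\langle\eta'\,\zeta'|\eta\,\zeta\rangle = e^{\bar\eta'\eta+\bar\zeta'\zeta}$ with the Gaussian weights gives, in the continuum limit, the phase-space action $\int_{0}^{T} dt\,[\,\bar\eta\, i\p_{t}\eta + \bar\zeta\, i\p_{t}\zeta - \mathcal{H}\,]$ together with the measure $\mathcal{D}\bar\eta\,\mathcal{D}\eta\,\mathcal{D}\bar\zeta\,\mathcal{D}\zeta$.

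Next I would trade the mode variables for the local fields $\psi(x)=\sum_{p}c_{p}e^{ipx}$ and $\varphi(x)=i\sqrt{2\Omega}\sum_{k}\frac{1}{|k|}\hat a_{k}e^{ikx}+\mathrm{H.C.}$ (with the analogous linear substitution for the Grassmann labels). For the fermions, $\bar\eta\, i\p_{t}\eta$ together with $\sum_{p}c_{p}^{+}[\gamma^{0}\gamma^{j}\hat p_{j}+m\gamma^{0}]c_{p}$ becomes $\int d^{3}x\,\psi^{\ast}[\,i\p_{t}-\gamma^{0}\gamma^{j}(-i\p_{j})-m\gamma^{0}\,]\psi$, and left-multiplying the bracket by $\gamma^{0}=\beta$ (using $\psi^{\ast}=\bar\psi\gamma^{0}$ and $(\gamma^{0})^{2}=I$) turns this into $\int d^{3}x\,\bar\psi(i\slashed{\p}-m)\psi$. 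For the phonons, the oscillator sector $\sum_{k}[\bar\zeta_{k}i\p_{t}\zeta_{k}-\Omega\bar\zeta_{k}\zeta_{k}]$ is Gaussian; expressing $\zeta_{k}$ through the Fourier modes of $\varphi$ and $\dot\varphi$ — the $1/|k|$ prefactor in the definition of $\varphi$ being chosen precisely so that the resulting free action is local — and performing the Gaussian integral over the conjugate momentum yields $\int d^{3}x\,[\,\frac{(\bm\nabla\dot\varphi)^{2}}{2\Omega^{2}}-\frac{(\bm\nabla\varphi)^{2}}{2}\,]$, the Jacobian of this field-independent linear change of variables being absorbed into the measure. Finally, inserting the same definitions into $\sum_{k,p}[i\tilde\alpha\, c_{p+k}^{+}\frac{1}{|k|}\hat a_{k}c_{p}+\mathrm{H.C.}]$, the factors $1/|k|$ and $\sqrt{2\Omega}$ recombine so that this equals $\charge\int d^{3}x\,\bar\psi\gamma^{0}\varphi\psi$ provided $\charge=\tilde\alpha\sqrt{1/(2\Omega)}$; assembling the three pieces and renaming the measure $\mathcal{D}\bar\psi\,\mathcal{D}\psi\,\mathcal{D}\varphi$ reproduces \eqref{ftfrohlich}.

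I expect the main obstacle to be the two points usually glossed over: the operator-ordering corrections in the short-time propagator, which is why the normal-ordered form of \eqref{frohlich} is used and why the Hermitian-conjugate structure of the electron--phonon vertex must be kept symmetrically discretized, and the change of variables in the bosonic sector from the holomorphic labels $\zeta_{k}$ to the pair $(\varphi,\dot\varphi)$, where one must verify that integrating out the momentum conjugate to $\varphi$ indeed produces the second-order, relativistic-type phonon Lagrangian with the coefficients displayed in \eqref{ftfrohlich}. The remaining manipulations — tracking the Fourier and volume factors in the definitions of $\psi$ and $\varphi$ — are routine.
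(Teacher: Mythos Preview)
Your proposal is correct and follows essentially the same approach as the paper: the paper's argument (given in the paragraph preceding the lemma) is precisely the coherent-state path-integral derivation you outline --- time-slice $\mathrm{Tr}\,e^{-i\mathcal{H}T}$, insert the mixed fermion/boson coherent-state resolution of identity, and then pass from the mode labels $(\eta,\zeta)$ to the local fields $(\psi,\varphi)$. In fact your write-up is considerably more detailed than the paper's sketch, explicitly tracking how the Dirac kinetic term, the second-order phonon Lagrangian, and the coupling $\charge=\tilde\alpha/\sqrt{2\Omega}$ emerge, and correctly flagging the two standard subtleties (normal ordering and the holomorphic-to-$(\varphi,\dot\varphi)$ change of variables).
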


It is worth mentioning that usually the nontrivial Vierbein appears when
Dirac fermions emerge in condensed matter models \cite{MR1989766}.
Moreover, this Vierbein fluctuates.
Under certain circumstances, such fluctuations
may give rise to the emergent gravity \cite{MR1989766}.
We, therefore, assume that the
emergent Vierbein has small fluctuations and can be transferred to the unity matrix
via the rescaling of space and time coordinates. In this model the emergent
Lorentz symmetry is broken due to the Coulomb forces. That is why we deal with an
exotic situation: the fermion Hamiltonian is of the Dirac form, while the
interactions are purely nonrelativistic.

In the low energy approximation $E\ll\Omega$,
we arrive at the partition function
with the action given by Eq. (\ref{gact}):

\begin{equation}
Z  =  \int\!\!d \bar{\psi}\,d \psi\,d \varphi\, {\rm exp}\Bigl(i\!\!\int\!\!d^3x\,dt\,
\bar{\psi} [i \slashed{\p}-m -\charge\gamma^0 \varphi\,] \psi
-
i\!\!\int\!\!d^3x\,dt\,
\frac{(\bm\nabla\varphi)^2}{2}\Bigr). \label{Z_p}
\end{equation}
After the Wick rotation ($t\rightarrow -it$, $\varphi \rightarrow i \varphi $)
to Euclidean space-time we arrive at
\begin{equation}
Z = \int\!\!d \psi\, d \bar{\psi}\,d \varphi\, {\rm exp} \Big(-\int\!\!d^3x\,dt\,
\bar\psi
[\Gamma^0(\p_0 + i \charge\varphi)+\Gamma\sp j\p\sb j+m]
\psi
+\int\!\!d^3x\,dt\,
\frac{(\bm\nabla\varphi)^2}{2} \Big). \label{Z_E}
\end{equation}

Here the Euclidean gamma-matrices $\Gamma^{\mu}$
satisfy $\{\Gamma^{\mu}, \Gamma^{\nu}\}=2\delta^{\mu \nu} $,
$\ 0\le \mu,\nu\le 3$.

\begin{remark}\label{remark-above}
In fact, the integration in (\ref{Z_E})
is not convergent due to the positive sign
at the kinetic term for $\varphi$.
This is exactly the same problem as for the
Euclidean functional integral for the gravitational theory
with the Einstein--Hilbert action (see below).
This shows that the theory defined by the partition
function (\ref{Z_p}) can be considered only as an effective low energy model.
At some (large) energies, the action has to be redefined in order to make the
Euclidean functional integral convergent. As well as for the quantum gravity,
this can be done if the term with higher powers of $\partial \varphi$ is added
to the action. This, in turn, regularizes the Coulomb interaction at small
distances. In physical applications,
such additional term in
the action appears at scales
at which the attraction due to optical
phonons no longer dominates,
and some other interactions come into play
(say, the Coulomb repulsion due to photons).
\end{remark}

\subsection{Application of semiclassical methods to the model}

In this subsection we follow the approach of \cite{PhysRevD.12.2443}
on the semiclassical methods for fermion systems and obtain similar results.
Here it is important that we consider the
phonon field constant in time. We come to the model
with the following partition function:
$$
Z=
\int\!\!d\bar{\psi}\,d\psi\, d f\,
{\rm exp}
\Big(
i\sum\limits\sb{\eta}T\int\!\!d^3 x\,
{\psi}_{\eta}^+ [\eta-{\cal H}\sb{\varphi}] \psi_{\eta}
-
i\!\!\int\!\!d^3 x dt\, \frac{(\bm\nabla\varphi)^2}{2}
\Big),
$$
where
$$
{\cal H}\sb{\varphi}
=\gamma^0
\Big[
-i\sum\sb{j=1}\sp{3}\gamma\sp j \p\sb j+m +\charge\gamma^0 \varphi
\Big].
$$
Here the system is considered
with the anti-periodic in time boundary conditions:
$\psi(t+T, x) =-\psi(t, x)$.
We use the decomposition
\begin{equation}\label{dec-0}
\psi(t,x) = \sum_{\eta=\frac{\pi}{T}(2k+1),\, k \in\Z} e^{-i \eta t}
\psi_{\eta}(x).
\end{equation}
We represent $\psi$ as $\psi_{\eta}(x) = \sum_n
c_{\eta, n} \Psi^{\varphi}_n(x)$, where $\Psi^{\varphi}_n$ is the
eigenfunction of ${\cal H}\sb{\varphi}$
corresponding to the eigenvalue $E^{\varphi}_n$
and normalized to unity ($\int\!\!d^3x\,{\Psi}_n^+ \Psi_n = 1$):
$$
Z = \int\!\!d \bar{c} \,d c\,d \varphi\,
{\rm exp}
\Bigl(
i \sum\limits_{\eta,n} T
\bar{c}_{\eta, n} [\eta-E^{\varphi}_n] c_{\eta, n}
-i\!\!\int\!\!d^3x\,
\,dt\, \frac{(\bm\nabla\varphi)^2}{2}
\Bigr)
.
$$
Integrating out the Grassmann variables $c_n$ we come to:
\begin{eqnarray}
\label{G_p_h_E}
&&
Z=\int\!\!d \varphi\,
{\rm exp}
\Bigl(-i\!\!\int\!\!d^3 x\,dt
\frac{(\bm\nabla\varphi)^2}{2}
\Bigr)
\prod_{\eta}
\prod_{n}
\big((\eta-E^{\varphi}_n)T\big)
\nonumber
\\
&&
\ =C\int\!\!d \varphi\,
{\rm exp}
\Bigl(-i\!\!\int\!\!d^3x\,dt\,\frac{(\bm\nabla\varphi)^2}{2}
\Bigr)
\prod\sb{n}\cos\frac{ T E^{\varphi}_n}{2},
\end{eqnarray}
where
$C$ depends on the details of the regularization but does not
depend neither on $T$ nor on the spectrum
in the continuum limit. The values
$E^{\varphi}_n$ depend on the parameters of the Hamiltonian,
with the index $n$ enumerating these values.

Eq. (\ref{G_p_h_E}) is derived as follows.
Recall that
in \eqref{dec-0}
the summation is over
$\eta=\frac{\pi}{T}(2k+1)$.
The product over $k$ can be calculated as in \cite{PhysRevD.12.2443}:
\begin{equation}
\prod_{k\in\Z}
\Big(1 + \frac{ E^{\varphi}_n T}{\pi(2k+1)}\Big)
=
\cos\frac{E^{\varphi}_n T}{2},
\label{DET0}
\end{equation}
where $T = 2 N a$.
Here we imply that the lattice regularization is introduced,
and $a$ is the lattice spacing while $2N$ is the lattice size in the imaginary time direction.
In the limit $a\rightarrow 0$ we come to $N \rightarrow \infty$.
Thus,
\begin{equation}
 {\rm Det} ( i \partial_{0}-{\cal H}_{\varphi})
=C\prod_n \cos\frac{E^{\varphi}_n T}{2}.
\end{equation}
We get (see also \cite{PhysRevD.12.2443,Rajaraman1975227}):

\begin{eqnarray}
&&
\hskip -40pt
Z
=
C\sum_{\{K_n\} = 0,1} \int\!\!d \varphi\,
{\rm exp}
\Bigl(
-
i T\int\!\!d^3x\,
\frac{(\bm\nabla\varphi)^2}{2}
+ \frac{i T}{2} \sum_n
E_n^{\varphi}-i T \sum_n K_n E_n^{\varphi}
\Bigr)
\nonumber\\
&&
\hskip -40pt
\quad
=
C'\sum_{\{K_n\} = 0,1} \, \int\!\!d \varphi\,
e^{i Q_{K_n}(\varphi)}.
 \label{G_p_h_E_}
\end{eqnarray}

Following \cite{PhysRevD.12.2443},
we interpret Eq. (\ref{G_p_h_E_}) as follows. $K_n$
represents the number of occupied states with the energy $E_n^{\varphi}$. These
numbers may be $0$ or $1$. The term $\sum_n E_n^{\varphi}$ vanishes
if $\varphi = 0$,
since in this case the values $E_n$ come in pairs with the opposite signs.

In the weak coupling approximation when $\frac{\charge^2}{4\pi}\ll 1$
the energy levels can be represented as
$E_n^{\varphi} \approx E_n^{0} + E_n^{\prime}\varphi$.
Then the integral over $\varphi$ is Gaussian;
it is equal to $\sim {\rm exp}(i Q_{K_n}(\varphi_{class}) T)$,
where $\varphi_{class}$ satisfies the variational problem
$\delta Q_{K_n}(\varphi)=0$.
%
%
In this limit,
the dominant contributions of the $\varphi$-configurations
satisfy the following variational problem:

\begin{eqnarray}
0 & = & \delta
\left[ -
\int\!\!d^3x\,
\frac{(\bm\nabla\varphi)^2}{2}
+
\sum_n
\Big(
\frac{E_n^{\varphi}}{2}-K_n E_n^{\varphi}
\Big)
\right]
\nonumber\\
&=&\delta
\int\!\!d^3x
\left[
-
\frac{(\bm\nabla\varphi)^2}{2}
-\sum_n
\Big(K_n
\zeta_n^+ {\cal H}\sb{\varphi} \zeta_n
-
\frac{\zeta_n^+ {\cal H}\sb{\varphi} \zeta_n}{2}
\Big)
\right].
\end{eqnarray}

In the right-hand side, the variation over $\zeta_n$ with the constraint $\int\!\!d^3x\,
\zeta_n^+ \zeta_n = 1$ gives, in addition, the one-fermion wave functions
$\zeta_n$. The variational problem can be written as

\begin{eqnarray}
0&=&\delta\int\!\!d^3 x
\left[
\sum_n \Big(\frac 1 2-K_n\Big)\zeta_n^
+[{\cal H}\sb{\varphi}-\lambda_n]\zeta_n
-\frac{(\bm\nabla\varphi)^2}{2}
\right]
\nonumber\\
&=&
\delta\int\!\!d^3 x
\left[
\sum_n \Big(K_n-\frac 1 2\Big)\bar{\zeta}_n
[i \slashed{\p}-m-\charge\gamma^0 \varphi]{\zeta}_n
-\frac{(\bm\nabla\varphi)^2}{2}
\right].
\label{var}
\end{eqnarray}

Here $\lambda_n$ are the Lagrange multipliers.
We introduced the
time dependence into $\zeta$: the variation is performed with respect to the
functions of the form $\zeta_n = e^{-i\lambda_n t} \zeta(x)$ and with
respect to
the time-independent
phonon cloud $\varphi$.
In this form,
the functional to be used in the variational problem almost coincides with the
action from (\ref{ftfrohlich}).
The difference is that we assume
the special form of
$\zeta \sim e^{-i \lambda t}$
and also that $\varphi$ does not depend on time.
Also, instead of the Grassmann variables,
we substitute ordinary wave functions and take into
account filling factors for the fermion states.

The additional constraint is that the wave functions $\zeta_n$ are different,
so that there are no states that are occupied more than once.
The variation is performed
with the numbers $K_n$ being fixed.
After the variational problem is solved one says that the state
with the wave function $\zeta_n$ that has been found
is occupied if $K_n\ne 0$ for the corresponding value of $n$.
In Eq. (\ref{G_p_h_E_}) we need to sum up all such configurations with different arrays $K_n$.
The calculated values of $\varphi$
are to be substituted into the exponent in Eq. (\ref{G_p_h_E_})
while the values of $E_n$ are given by
$E_n = {\zeta}^+_{n} {\cal H} \zeta_{n}$.
This approach is similar to the conventional Hartree--Fock approximation.

We come to the following result:

\begin{theorem}
The partition function for the system of Dirac electrons interacting with
optical phonons at low energies $E \ll\Omega$ is given by Eq. (\ref{G_p_h_E_}).
In the weak coupling limit,
the integral over $\varphi$ in (\ref{G_p_h_E_}) is evaluated
in the stationary phase approximation,
resulting in the variational problem (\ref{var}).
\end{theorem}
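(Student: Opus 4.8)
The plan is to assemble into one argument the chain of identities established in this subsection. First I would invoke the Lemma of the previous subsection, by which $\mathrm{Tr}\,e^{-i\mathcal{H}T}$ equals the coherent-state functional integral \eqref{ftfrohlich}; in the regime $E\ll\Omega$ the term $(\bm\nabla\dot\varphi)^2/(2\Omega^2)$ is negligible against $(\bm\nabla\varphi)^2/2$, reducing \eqref{ftfrohlich} to \eqref{Z_p}. Next, restricting to phonon configurations $\varphi$ constant in time and imposing the anti-periodic boundary condition $\psi(t+T,x)=-\psi(t,x)$, I would insert the Matsubara decomposition \eqref{dec-0} and expand each mode $\psi_\eta$ in the orthonormal eigenbasis $\{\Psi_n^\varphi\}$ of the time-independent operator $\mathcal{H}_\varphi$. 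The fermionic action then diagonalizes as $i\sum_{\eta,n}T\,\bar c_{\eta,n}(\eta-E_n^\varphi)c_{\eta,n}$, and integrating out the Grassmann coefficients $c_{\eta,n}$ produces the double product $\prod_\eta\prod_n(\eta-E_n^\varphi)T$.

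The key computational step is the evaluation of the product over the odd Matsubara frequencies $\eta=\frac{\pi}{T}(2k+1)$. Working on a time lattice with $T=2Na$ and using \eqref{DET0}, $\prod_{k\in\Z}\big(1+\frac{E_n^\varphi T}{\pi(2k+1)}\big)=\cos\frac{E_n^\varphi T}{2}$, the $\eta$-product collapses --- up to a $\varphi$- and spectrum-independent constant $C$ that absorbs the regularization --- to $\prod_n\cos\frac{TE_n^\varphi}{2}$, i.e.\ $\mathrm{Det}(i\partial_0-\mathcal{H}_\varphi)=C\prod_n\cos\frac{E_n^\varphi T}{2}$; this is \eqref{G_p_h_E}. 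Expanding each cosine via $\cos\theta=\tfrac12(e^{i\theta}+e^{-i\theta})$ and labelling its two terms by $K_n\in\{0,1\}$ turns the product into the sum over occupation arrays $\{K_n\}$ of $\exp\big(\tfrac{iT}{2}\sum_n E_n^\varphi-iT\sum_n K_n E_n^\varphi\big)$ weighted by $\exp(-iT\int d^3x\,(\bm\nabla\varphi)^2/2)$, which is precisely \eqref{G_p_h_E_} with the phase $Q_{K_n}(\varphi)$ read off from the exponent. This establishes the first assertion of the theorem.

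For the second assertion I would evaluate the remaining $\varphi$-integral in \eqref{G_p_h_E_}. In the weak-coupling regime $\charge^2/4\pi\ll1$, first-order perturbation theory gives $E_n^\varphi\approx E_n^0+E_n'\varphi$, so that $Q_{K_n}(\varphi)$ is quadratic in $\varphi$ and the integral is Gaussian; its stationary-phase value is controlled by the critical point $\varphi_{\mathrm{class}}$ of $\delta Q_{K_n}(\varphi)=0$. Writing $E_n^\varphi=\zeta_n^+\mathcal{H}_\varphi\zeta_n$ under the constraint $\int d^3x\,\zeta_n^+\zeta_n=1$, introducing Lagrange multipliers $\lambda_n$, and promoting $\zeta_n$ to $e^{-i\lambda_n t}\zeta_n(x)$ with $\varphi$ time-independent, the stationarity condition becomes the coupled system \eqref{var}, in which the filling factors $K_n-\tfrac12$ multiply the Dirac form $\bar\zeta_n(i\slashed\partial-m-\charge\gamma^0\varphi)\zeta_n$.

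The main obstacle --- and the point that is only formal at the level of rigor of this section --- is the manipulation of the divergent infinite product over Matsubara frequencies: one must commit to the lattice regularization $T=2Na$, take $a\to0$ ($N\to\infty$) carefully, and verify that the discarded factors depend on neither $T$ nor the spectrum so that they can be lumped into $C$ (equivalently, that the $\zeta$-function-type regularization of $\prod_\eta(\eta-E_n^\varphi)$ is well defined), exactly as in \cite{PhysRevD.12.2443}. A secondary subtlety is the error in the stationary-phase evaluation of the $\varphi$-integral: the quadratic approximation $E_n^\varphi\approx E_n^0+E_n'\varphi$ drops $O(\charge^4)$ corrections, and the fluctuation determinant around $\varphi_{\mathrm{class}}$ is likewise subleading in $\charge^2/4\pi$; both are handled here in the physicists' stationary-phase sense rather than with quantitative remainder estimates.
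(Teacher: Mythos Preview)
Your proposal is correct and follows essentially the same route as the paper: reduce \eqref{ftfrohlich} to \eqref{Z_p} in the regime $E\ll\Omega$, diagonalize the fermionic action via the Matsubara decomposition \eqref{dec-0} and the eigenbasis of $\mathcal{H}_\varphi$, integrate out the Grassmann variables, collapse the frequency product using \eqref{DET0} to obtain the cosine product \eqref{G_p_h_E}, expand each cosine into two terms labeled by $K_n\in\{0,1\}$ to reach \eqref{G_p_h_E_}, and then apply the weak-coupling stationary-phase approximation to arrive at \eqref{var}. The caveats you flag about the regularization of the Matsubara product and the formal nature of the stationary-phase step are exactly those the paper acknowledges by deferring to \cite{PhysRevD.12.2443}.
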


\begin{remark}
CP-invariance implies that for any state $n$ with the energy $E_n^{\varphi}$
there exists a state $\tilde{n}$ with the energy
$E_{\tilde{n}}^{\varphi} = - E_n^{-\varphi}$.
In particular, for $\varphi = 0$ we obtain $\sum_n E_n = 0$.
States with positive values $E_n$
are interpreted as electrons, while the states with
negative $E_n$ correspond to holes. The vacuum state is in this case the state
with all negative levels $E_n$ occupied. The situation is changed when
$\varphi \ne 0$. However, if ${\rm max}\, \charge\varphi \le m$,
the states with $E_n> 0$
are also interpreted as electrons while the states with $E_n < 0$ are
interpreted as holes. If ${\rm max}\, \charge\varphi > 2m$, then
there could be states
which can not be considered as either electrons or holes.
Instead, these states
correspond to the Schwinger pair creation process. The appearance of such
states, however, may be avoided in the weak coupling limit,
when $\charge$ is small and,
therefore, one almost always has
$\charge\varphi<m$. That is why in the weak coupling
the vacuum can again be considered
as the state with all negative levels of $E_n$ occupied
and all positive levels of $E_n$ empty.
At large enough values of $\alpha =\frac{\charge^2}{4\pi}$
this pattern may be changed
due to the creation of pairs that may lead to the change of vacuum. In this case the
fermion condensate may appear;
the description of the theory in terms of the
collection of one-fermion states is no longer relevant.

\end{remark}

\subsection{One-polaron problem}

Now let us consider the usual polaron problem, i.e. the problem of one electron
interacting with the phonon cloud. Recall that $Z = {\rm Tr}\, e^{-i {\cal H}
t}$.
Therefore, the sum in Eq. (\ref{G_p_h_E_}) corresponds to the sum over
many-fermion states. The state with $K^{vac}_n = \frac{1}{2}(1-{\rm sign}\,
E_n)$ corresponds to the vacuum. Then the state with $K_n(q) = K^{vac}_n +
\delta_{n q}$ for some $E_q > 0$ corresponds to the state
that consists of the vacuum (the Dirac sea of negative levels)
and the bound state of one electron and
the phonon cloud surrounding it. This is a polaron.
The vacuum is translation-invariant and CP-invariant.
That is why $\varphi_{vac} = 0$. The vacuum energy is
defined as $E_{vac} = \sum_n K_n^{vac} E_n^{0}$. The polaron energy is equal to
\begin{equation}
E_q = \int\ d^3x\, \frac{(\bm\nabla\varphi_q)^2}{2}
+\sum_n \Big(K_n(q)-\frac 1 2\Big) E_n^{\varphi_q},
\end{equation}
where $\varphi_q$ is defined by solving the variational problem (\ref{var}).
Infinite vacuum energy has to be subtracted from $E_q$;
the quantity ${\cal E}_q = E_q-E_{vac}$ is finite and is considered as a
renormalized polaron state energy in physical applications.
In the general case, the
renormalized polaron energy
contains the contribution from the virtual electron-hole pairs
which are born when large enough $\varphi_q$ appears.
The interaction with the Dirac sea also contributes into ${\cal E}_q$.
However, under certain circumstances, these contributions can be neglected.
This is the so-called quenched approximation when the interactions
between different fermions are neglected while the interaction between the phonon
cloud and the electron is taken into account. This means that the probability
that the electron-hole pair is created from vacuum is small,
while the total electric charge $Q$ for the given problem is implied equal to unity.
For the case of the conventional polaron these conditions are assumed
\cite{froelich-1954, polaron_lectures, 1990AnPhy.202..320C, polaron-1933, polaron-1946}.
In our case the setup for the one-polaron problem should include
$Q=1$ and $|{\cal E}_q - m| \ll m$.
The latter condition provides that the probability for the electron-hole pairs
to be created is small.
In this case, the polaron energy is calculated as
\begin{equation}
{\cal E}_q =
\int\!\!d^3x\, \frac{(\bm\nabla\varphi_q)^2}{2} +
E_q^{\varphi_q},
\end{equation}
where $\varphi_q$
is calculated via the variational problem
\begin{equation}
0 = \delta \int\!\!d^3x\,
\Big\{\bar{\zeta} [i \slashed{\p}-m -\charge\gamma^0 \varphi_q ]
\zeta
-
\frac{(\bm\nabla\varphi_q)^2}{2}\Big\}.
\label{var___}
\end{equation}
This problem, in turn, leads to the following equations:
\begin{eqnarray}
\label{2dd}
&&
i\p\sb{t}\zeta
=-i
\bm\alpha\!\cdot\!\bm\nabla
\zeta+m\beta\zeta
+\charge\varphi_q(x,t)\zeta(x,t),
\\
\label{maxwell-0}
&&
\Delta\varphi_q(x,t) =\charge\abs{\zeta(x,t)}^2,
\end{eqnarray}
where
$x\in\R^3,$
$\zeta(x,t)\in\C^{4}$.
That is,
the potential $\varphi_q(x,t)\in\R$
is generated by the spinor field itself.
That is why we come to Eq. (\ref{2dd-0}) with the important constraint on the wave function $\zeta$:
\begin{equation}
\int\!\!d^3x\,
\zeta^+ \zeta = 1.
\label{constrnt}
\end{equation}

We arrive at the following statement:

\begin{lemma}
\label{polaronenergy} Let us consider the problem of bound states of one
electron surrounded by the phonon cloud (polaron) in the field theory with
partition function (\ref{ftfrohlich}).
In the low energy approximation $E \ll \Omega$
the system of equations (\ref{2dd}), (\ref{maxwell-0}), (\ref{constrnt}),
as well as the
variational problem (\ref{var___}), solve this problem in the first
order
approximation of the weak coupling expansion.
\end{lemma}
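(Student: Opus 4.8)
The plan is to assemble the statement from the chain of reductions already developed in this section rather than to prove something genuinely new. First I would invoke the Lemma asserting the equivalence of the Frohlich Hamiltonian \eqref{frohlich} with the field theory of partition function \eqref{ftfrohlich}; in the regime $E\ll\Omega$ the term $(\bm\nabla\dot\varphi)^2/(2\Omega^2)$ is negligible next to $(\bm\nabla\varphi)^2/2$, so the effective action reduces to that of \eqref{Z_p}. Next I would appeal to the Theorem of the preceding subsection: in the weak coupling limit $\charge^2/4\pi\ll 1$ the $\varphi$-integral in \eqref{G_p_h_E_} is dominated by its stationary point, and the problem collapses to the many-fermion variational problem \eqref{var}, with the spinors $\zeta_n$ (normalized by $\int d^3x\,\zeta_n^+\zeta_n=1$) and the occupation numbers $K_n\in\{0,1\}$ as the variables. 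Up to this point nothing is specific to the polaron.

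The one-polaron specialization is the substantive step. I would set $K_n(q)=K_n^{\mathrm{vac}}+\delta_{nq}$ for a single level $E_q>0$, built on the vacuum array $K_n^{\mathrm{vac}}=\frac12(1-\sgn E_n)$, exactly as in the discussion above \eqref{constrnt}. Since the vacuum is translation- and CP-invariant one has $\varphi_{\mathrm{vac}}=0$, so the Dirac-sea terms in \eqref{var} only contribute the $\varphi$-independent vacuum energy $E_{\mathrm{vac}}$, which is subtracted. In the quenched approximation — total charge $Q=1$ and $|{\cal E}_q-m|\ll m$, so that the creation of virtual electron-hole pairs is suppressed — all remaining terms of \eqref{var} collapse to the single contribution labelled $n=q$, and what survives is precisely the single-fermion variational problem \eqref{var___}. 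Its Euler--Lagrange equations with respect to $\zeta$ and to the static $\varphi_q$ are \eqref{2dd} and \eqref{maxwell-0} respectively, and the constraint enforced by varying $\zeta$ (with the Lagrange multiplier $\lambda$ appearing as the temporal frequency of $\zeta$) is the normalization \eqref{constrnt}. Hence \eqref{2dd-0} together with \eqref{constrnt} solve the one-polaron problem to first order in the weak-coupling expansion.

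The main obstacle is justifying this last step, i.e.\ controlling the contributions of the Dirac sea and of virtual pairs to ${\cal E}_q$ and showing that they are of higher order in $\charge$. The crucial observation is that in weak coupling the self-consistent potential obeys $\max\,\charge\varphi_q<m$ — indeed $\varphi_q$ itself scales with $\charge$, so $\charge\varphi_q\to 0$ as $\charge\to 0$ — which keeps the spectrum of ${\cal H}_{\varphi_q}$ well away from the Schwinger pair-creation threshold $\max\,\charge\varphi>2m$. This is what preserves the clean electron/hole labelling of the levels $E_n^{\varphi_q}$ and legitimizes dropping the pair contributions. The remaining ingredients — the Gaussian (stationary-phase) evaluation of the $\varphi$-integral and the passage from \eqref{var} to \eqref{var___} — are routine given the Theorem already established.
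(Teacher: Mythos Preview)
Your proposal is correct and follows essentially the same route as the paper: the lemma is not given a separate proof there but is stated as a summary of the chain of reductions in the preceding text (equivalence with the Frohlich Hamiltonian, the low-energy limit $E\ll\Omega$, the Theorem yielding the variational problem \eqref{var}, the one-polaron choice $K_n(q)=K_n^{\mathrm{vac}}+\delta_{nq}$, and the quenched approximation collapsing \eqref{var} to \eqref{var___}). Your added justification that $\charge\varphi_q$ stays below the pair-creation threshold in weak coupling is exactly the content of the Remark following the Theorem, so you are drawing on the same ingredients.
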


\begin{remark}
It is important that only those solutions of the system (\ref{2dd-0})
which are normalized according to Eq. (\ref{constrnt})
have a physical meaning.
\end{remark}

It is worth mentioning that the given variational problem may be relevant for
the solution of polaron problem not only in the weak coupling regime (see, for
example, \cite{polaron_lectures,polaron-1933,polaron-1946}).
In the nonrelativistic case, this variational problem
has appeared in the approach due to Landau and Pekar
\cite{polaron-1933,polaron-1946}.
However, in these papers
the trial
functions were used for the minimization, while we consider the given
variational problem exactly.

It is instructive to consider how Eq. (\ref{var___}) appears from the
consideration of the two-point Green function

\begin{equation}
G(t_2-t_1) = \frac{1}{Z}\int\!\!d \bar{\psi}\,d \psi\,d \varphi\,
e^{
i\!\!\int\!\!d^3x\,dt\,
\big\{
\bar{\psi} [i \slashed{\p}-m -\charge\gamma^0 \varphi
\,] \psi -
\frac{(\bm\nabla\varphi)^2}{2}
\big\}
}
{\psi}^+(t_1,x) \psi(t_2,x)\,d^3x\,.
\label{GGG_p_h_}
\end{equation}

Let us mention that the consideration of arbitrary values of $t_1, t_2$
requires a more complicated technique, when $\varphi$ is constant as a function
of time except at the points $t_1, t_2$. This is because the insertion of
$\psi$ and $\bar{\psi}$ disturbs vacuum in such a way that the value of
$\varphi$ is changed at $t_1$ and $t_2$. This technique uses the so-called
Floquet indices and will be applied in the next section to the consideration of
gravitational polarons, which are the relativistic generalizations of the objects
considered in this section. Here we restrict ourselves
to the case $t_1 = 0, t_2 = T$. Then

\begin{equation}
G(T)=\frac{{\rm const}}{i T Z}\sum_{\{K_n\} = 0,1}
\int\!\!d\varphi\,
e^{-i T\int\!\!d^3x\, \frac{(\bm\nabla\varphi)^2}{2}}
e^{
i T\sum_n
\big(\frac{1}{2}-K_n \big)E_n^{\varphi}
}
\sum_{\eta = \frac{\pi}{T}(2k+1),q}\frac{-e^{-i\eta T}}{\eta-E_q^{\varphi}}.
\end{equation}

Using the Poisson summation formula \cite{PhysRevD.83.025005,PhysRevD.19.2385},
we get
$$
g(t,E_q^{\varphi}) =
\sum_{\eta = \frac{\pi}{T}(2k+1)}\frac{e^{-i\eta t}}{\eta-E_q^{\varphi}}
=\frac{-i T e^{-iE_q^{\varphi}t} }{1+e^{-iE_q^{\varphi}T}}.
$$
Here it is implied that the energy levels have small imaginary parts
(as usual in the quantum field theory).
After the Wick rotation ($T\rightarrow -i/{\cal T} $, $\cal T$
being the temperature)
the given expression would become the usual
finite temperature  Matsubara Green function.
Therefore, we arrive at

\begin{equation}
G(T)=\sum_q \frac{ \sum\limits\sb{\{K_n\} = 0,1} \int\!d\varphi\,
e^{
i T\big(-\int\!d^3x\, \frac{(\bm\nabla\varphi)^2}{2}
+
\sum_n \frac{E_n^{\varphi}}{2}
-\sum_{n\ne q} K_n E_n^{\varphi}-E_q^{\varphi}
\big)
}
}
{\sum\limits\sb{\{K_n\} = 0,1} \int\!d\varphi\,
e^{i T\big(
-
\int\!d^3x\, \frac{(\bm\nabla\varphi)^2}{2}
+\sum_n \frac{E_n^{\varphi}}{2}
-\sum_n K_n E_n^{\varphi} \big)}
}.
 \label{GGG_p_h_E_}
\end{equation}

Here in each term of the summation over $K_n$
the field $\varphi$ is to be determined
in the stationary phase approximation via solving the variational problem (\ref{var}).
Again, in the quenched approximation we come to
$$
G(T)=\sum_q \int\!\!d \varphi\,
e^{
-
i T
[\int\!\!d^3x\, \frac{(\bm\nabla\varphi)^2}{2}+E_q^{\varphi}]
}
= \sum_q Z_q e^{-i{\cal E}_q
T}.
$$
Here ${\cal E}_q$ is the energy of the polaron in the $q$th state calculated
according to Lemma~\ref{polaronenergy}.
The factors $Z_q$ are the pre-exponential
factors of the stationary phase approximation.


\section{Gravitational polarons}
\label{sectgrav}

\subsection{Semiclassical description of fermions
coupled to the gravitational field}

In the previous section,
we considered the quantum system of Dirac electrons
interacting via the attractive Coulomb potential. This system may appear in
certain nonrelativistic models at the quantum phase transition between the
phases of the fermionic systems with different values of topological invariants
\cite{MR2911335}. However, the Coulomb interaction breaks the emergent Lorentz
symmetry. It is interesting, therefore, to consider the relativistic extension
of the model defined by the partition function (\ref{Z_p});
one such extension is discussed in this section.

We consider the relativistic Dirac fermion interacting with the gravitational
field. The action of a Dirac spinor in Riemann space has the form
\cite{doi:10.1142/S0217732310034110,shapiro2002physical,PhysRevD.70.105004,PhysRevD.84.124042,2012arXiv1208.1254V}
\begin{equation}
S_f = \int
\big(
i\bar{\psi}{\bm \gamma}^{\mu} D\sb{\mu}\psi-m\bar{\psi}\psi
\big)
|E|\,d^4x\,.
\label{Sf}
\end{equation}
Here $|E| = {\rm det} E^a_{\mu}$, where $E^a_{\mu}$ is the inverse Vierbein,
${\bm \gamma}^{\mu} = E^{\mu}_a \gamma^a$,
and
$\bar{\psi} = \psi^+ \gamma^0$.
The covariant derivative is
\begin{equation}
D\sb{\mu} =
\partial_{\mu} + \frac{1}{4} \omega_{\mu}^{ab} \gamma_{[a}\gamma_{b]},
\label{GD}
\end{equation}
with
$\gamma_{[a}\gamma_{b]} =
\frac{1}{2}(\gamma_{a}\gamma_{b}-\gamma_{b}\gamma_{a})$.
 The torsion-free spin
connection is denoted by $\omega_{\mu} $.
It is related to $E^a_{\mu}$ and
the affine connection $\Gamma^{\rho}_{\mu\nu}$ as follows:
\begin{eqnarray}
\bm\nabla_{\nu} E_{\mu}^{a} &=& \partial_{\nu}E^a_{\mu}-\Gamma^{\rho}_{\mu
\nu}E^a_{\rho} + \omega^{a}_{ . b\nu}E^b_{\mu}=0,
\nonumber\\
{D}_{[\nu} E_{\mu]}^{a} &=& \partial_{[\nu}E^a_{\mu]} + \omega^{a}_{.b[\nu
}E^b_{\mu]}=0.
\end{eqnarray}
This results in:
\begin{eqnarray}
&&
\Gamma^{\rho}_{\mu \nu}= \{^{\rho}_{\mu \nu}\} =
\frac{1}{2}g^{\alpha\lambda}(\partial_{\beta}g_{\lambda
\gamma}+\partial_{\gamma}g_{\lambda \beta}-\partial_{\lambda}g_{\beta \gamma}),
\nonumber\\
&&\omega_{ab\mu} = \frac{1}{2}( c_{abc}-c_{cab}+c_{bca})E^{c}_{\mu}.
\end{eqnarray}
Here $c_{abc} = \eta_{ad}E^{\mu}_b E_c^{\nu}\partial_{[\nu}E^d_{\mu]}$,
$\ g_{\mu
\nu} = E^a_{\mu}E^b_{\nu}\eta_{ab}$,
and
$\ \Gamma^{\rho}_{\mu\nu}-\Gamma^{\rho}_{\nu \mu} = 0$;
indices are lowered and lifted with the aid of $g$ and $E$.

The partition function of the model is given by
$$
Z = \int\!\!d \bar{\psi}\,d \psi\,d E\,
e^{i\!\!\int\!\!d^4x\,
\big(
|E|
\bar{\psi} [i\slashed{D}-m] \psi-\frac{1}{16 \pi G} R |E|\big)
}
$$
where $\slashed{D}={\bm \gamma}\sp\mu D\sb\mu$,
with $D_{\mu}$ given by Eq. (\ref{GD}).

\begin{remark} The integral over the Grassmann variables $\psi$ in continuum field
theory requires additional discussion. There are several ways to
define the functional integral: via lattice discretization, via re-expressing it
as a functional determinant, etc.
In all these cases, the presence of a nontrivial metric
leads to additional difficulties.
Below we assume that the functional integral
is defined in such a way that
\begin{equation}
\int\!\!d \bar{\psi}\,d \psi \,
e^{i\!\!\int\!\!d^4x\, |E| {\psi}^+ \hat{Q}
\psi}
= {\rm Det}\, \hat{Q} = \prod_n \lambda_n, \label{DET_}
\end{equation}
where $\lambda_n$ are eigenvalues of $\hat{Q}$. (The spectrum of $\hat{Q}$ is
discrete if we consider the system in the finite four-volume $V_4 = \int\!\!d^4x\,
|E|$.)
We assume the toroidal topology,
and also that
in a synchronous reference frame
the boundary conditions
are antisymmetric in time and symmetric in the spatial coordinates.
Eq. (\ref{DET_}) can be rewritten as
\begin{equation}
{\rm Det}\, \hat{Q}=\int\!\!d \bar{\psi}\,d \psi \,
e^{
i \sum_n
\lambda_n \int\!\!d^4x\, |E| {\Psi}_n^+(x) \Psi_n(x) \bar{c}_n c_n
}
= \int\!\!d \bar{c} \,d c
\,{\rm Det}\frac{\partial(\bar{c},c)}{\partial(\bar{\psi},\psi)}
\,
e^{i \sum_n \lambda_n \bar{c}_n c_n
},\label{DET}
\end{equation}
where we used the decompositions
\begin{equation}
\psi = \sum_n \Psi_n(x) c_n,
\qquad
\psi^+ = \sum_n \Psi^+_n(x)
\bar{c}_n.
\label{dec}
\end{equation}
Here $\Psi_n(x)$ are eigenfunctions of $\hat{Q}$ corresponding to eigenvalues
$\lambda_n$, while $c_n, \bar{c}_n$ are new Grassmann variables.
The operator
$\hat{Q}$ is assumed to be hermitian with respect to the inner product $\langle
\psi, \phi \rangle = \int\!\!d^4x\, |E| \psi^+ \phi$. Therefore, the eigenfunctions
satisfy $\int\!\!d^4x\, |E| {\Psi}_n^+(x) \Psi_m(x) = \delta_{n m}$.
The key
assumption about the integration measure over $\psi$ is that with this
normalization one has
${\rm Det}\frac{\partial(\bar{c},c)}{\partial(\bar{\psi},\psi)}=1$.
Eq. (\ref{DET}) also allows us to calculate various correlation
functions $\langle \psi^+(x_1)\dots \psi(x_N)\rangle$.
Namely, we first represent $\psi$
as the series (\ref{dec}),
and then the integral over $\bar{c},c$ is evaluated as usually.
\end{remark}

Again, the integral in the Euclidean space is not convergent, and should be redefined
at high energies; see the discussion above (Cf. Remark~\ref{remark-above}).
In order to bring the theory into the
form suitable for the considerations similar to that of \cite{PhysRevD.12.2443},
let us
consider the system in the gauge corresponding to a synchronous reference
frame. In this gauge, $E^0_a E^\mu_b \eta^{ab} = \delta^{0\mu}$;
we also set
$E^0_a = \delta^0_a$ via the rotation of the
reference frame
in its internal space and the
corresponding $SO(3,1)$ transformation of spinors. That is why the gauge is
fixed both with respect to the general coordinate transformations and with respect
to the inner $SO(3,1)$ rotations of the reference frame. We denote
$$
\gamma^{0}
\big[i E^{\mu}_a \gamma^{a}
\big(\partial_{\mu}+ \frac{1}{4}
\omega_{\mu}^{ab} \gamma_{[a}\gamma_{b]}\big) -m
\big]
=
i\partial_{0}-{\cal H},
$$
$$
-{\cal H} = i E\sp j_b \gamma^{0} \gamma^{b}
\p\sb j+ i E^{\mu}_b \gamma^{0}\gamma^{b}\frac{1}{4} \omega_{\mu}^{c d}
\gamma_{[c}\gamma_{d]} -\gamma^{0} m ,
$$
where the summation in $j$ is over $j = 1,\,2,\,3$.

We point out that the operator $i
\partial_{0}-{\cal H}$ is hermitian (while $\cal H$ is not). We have
$$
Z  =  \int\!\!d E\, {\rm Det} ( i
\partial_{0}-{\cal H} )\,
e^{-\frac{i}{16 \pi G} \int\!\!d^4x\,R |E|}.
$$

In order to calculate the determinant $ {\rm Det} ( i
\partial_{0}-{\cal H}) $,
we use anti-periodic in time boundary conditions.
Suppose that we find the solution $\zeta$ of
the equation $( i \partial_{0}-{\cal H} )\zeta = 0$ such that
$\zeta_{\Omega}(t + T) = e^{-i \Omega T} \zeta_{\Omega}$. (Here $\Omega T$ is
the Floquet index \cite{PhysRevD.12.2443}). Then $\Psi_{k, \Omega} =
e^{i\frac{\pi}{T}(2k+1)t + i\Omega t}\zeta_{\Omega}$ is the eigenfunction of
the operator $( i
\partial_{0}-{\cal H} ) $:
\begin{equation}
( i \partial_{0}-{\cal H} )\Psi_{k,\Omega}
=-\Big(\frac{\pi}{T}(2k+1) +
\Omega\Big)\Psi_{k,\Omega}.
\end{equation}
With the derivation similar to that of Eq. (\ref{G_p_h_E})
and Eq. (\ref{DET0})
we come to
\begin{equation}
{\rm Det} ( i
\partial_{0}-{\cal H}) ={\rm const}\, \prod_n \cos\frac{\Omega_n^E T}{2}.
\end{equation}
Here $\rm const$ depends neither on the gravitational field
nor on $T$,
while the product is over the different values $\Omega_n^E  T$
of the Floquet index \cite{PhysRevD.12.2443}.
These values depend on the Vierbein field $E^{\mu}_A$. The index $n$ enumerates them.
The partition function takes the following form
(Cf. Eq. (\ref{G_p_h_E_})):

\begin{equation}
Z \sim {\rm const}\,\sum_{\{K_n\} = 0,1} \int\!\!d E\, {\rm exp}\Bigl(- i
m_P^2 \int\!\!d^4x\,|E| R
+ i \frac{T}{2} \sum_n \Omega_n^E -i T \sum_n K_n \Omega_n^E \Bigr).
 \label{G_p_h_E_2}
\end{equation}

Here $m_P$ is the Planck mass.
Following \cite{PhysRevD.12.2443}, we interpret Eq.
(\ref{G_p_h_E_2}) as follows.
The numbers $K_n$ represent the number of occupied
states with the Floquet index $\Omega_n^E T$. These numbers may be $0, 1$.
The vacuum here corresponds to the negative ``energies'' (Floquet indices)
occupied and positive ``energies'' empty.

Here the physical meaning of the numbers $K_n$ is the same as in the previous
section.
The only difference is that now the Floquet indices appear in place of the
energy levels and that the gravitational field depends on time.
The semiclassical
approximation for the gravitational field now leads to the variational problem

\begin{eqnarray}
0 &=& \delta \Big\{-\sum_n [K_n-\frac12] \Omega_n^E T-m_P^2 \int\!\!d^4x\, R
|E|\Big\} \\
&=& \delta \int\!\!d^4x\,
\Big\{\sum_n [K_n-\frac12]{\Psi}^+_{k, \Omega^E_n} ( i
\partial_{0}-{\cal H}+\frac{\pi}{T}(2k+1) )\Psi_{k, \Omega^E_n}-m_P^2
R \Big\}|E|.\nonumber
\label{var2_}
\end{eqnarray}

Here $E$ is varied, $k$ is arbitrary,
and the normalization is $\int\!\!d^4x\, |E|
{\Psi}^+_{k, \Omega^E_n} \Psi_{k, \Omega^E_n} = T$. Let us also introduce the
wave function
\begin{equation}
\zeta_{n,\lambda,E}=e^{-i(\frac{\pi}{T}(2k+1) + \lambda_n) t} \Psi_{k, \Omega^E_n}.
\end{equation}
The values $\lambda_n$ play the role of Lagrange multipliers. At $\Omega_n =
\lambda_n$, the functions $\zeta_{n,\lambda,E}$ satisfy the following conditions:
$$
\zeta_{n,\lambda,E} (t+T) = e^{-i\lambda T} \zeta_{n,\lambda,E}(t),
$$
$$
\int\!\!d^4x\, |E|
\bar{\zeta}_{n,\lambda,E} E^0_A\gamma^A \zeta_{n,\lambda,E} = T,
$$
\begin{equation}
[i \slashed{D}-m ]\zeta_{n,\lambda,E} = 0.\label{DIRAC}
\end{equation}


Eq. (\ref{var2_}) can be rewritten as
\begin{equation}
0 =\int\!\!d^4x\,\Bigl(\sum_n [K_n-1/2]\bar{\zeta}_{n, \lambda, E} \{\delta[i
\slashed{D}
-m ]|E|\} \zeta_{n, \lambda, E}-m_P^2 \{\delta R |E|\} \Bigr).
\label{var2_2}
\end{equation}

One can see that the variation of $\zeta_{n, \lambda, E}$
does not enter this expression that defines the field $E$
for any given $\zeta_{n, \lambda, E}$.
At the same time one can see that the variation of $\zeta_{n, \lambda, E}$
would give the Dirac equation (\ref{DIRAC}). That is why
for the determination of both $\zeta_{n, \lambda, E}$ and $E$ we may use
the variational problem
\begin{equation}
0=
\delta \int\!\!d^4x\,\big\{\sum_n
\big(K_n-\frac12\big)\bar{\zeta}_{n, \lambda, E}
[i \slashed{D}-m ]
\zeta_{n, \lambda, E}
-m_P^2 R \big\}|E|,
\label{var2}
\end{equation}
where
the gravitational field and the wave functions $\zeta_n$ are varied. The
wave functions are normalized so that $\int\!\!d^4x\, |E| {\zeta}^+ \zeta = T$.
The additional constraint is that the wave functions $\zeta_n$ are different,
so that there are no states that are occupied more than once.
The variation is performed with the fixed values of $K_n$.
The final form of the variational problem is gauge invariant,
although it was
derived in a synchronous reference frame.
As a result, the gravitational field is
defined by the Einstein equations with the energy-momentum tensor defined by
the set of one-fermion states that represent the sea of occupied energy levels
and the fermion-antifermion excitations given by the set $K_n$. Fermion wave
functions are defined by the Dirac equation in the given external gravitational
field.

After the variational problem is solved one says that the state
with the wave function $\zeta_n$ that has been found
is occupied if $K_n\ne 0$ for the corresponding value of $n$.
In Eq. (\ref{G_p_h_E_2}) we need to sum up all such configurations
with different arrays $K_n$.
The calculated values of $E$ are to be substituted
into the exponent in Eq. (\ref{G_p_h_E_2})
while the values of $\Omega^E_n$ are given by
$\Omega^E_n = {\zeta}^+_{n} {\cal H} \zeta_{n}$.
This is the generalization of the Hartree--Fock approximation.

We came to the following result:

\begin{theorem}
The partition function for the system of Dirac fermions interacting with the
gravitational field is given by Eq. (\ref{G_p_h_E_2}).
In the semiclassical approximation
(for the energies $E \ll m_P$) the stationary phase approximation
leads to the variational problem (\ref{var2}).
\end{theorem}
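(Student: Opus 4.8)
The plan is to recapitulate the construction of this section in the spirit of \cite{PhysRevD.12.2443}: integrate out the fermions exactly to obtain an effective action for the Vierbein, and then localize the remaining integral over $E$ by stationary phase. First I would fix a synchronous reference frame, setting $E^0_a=\delta^0_a$ through an internal $SO(3,1)$ rotation, so that the Dirac operator takes the form $i\partial_0-\mathcal{H}$ with $i\partial_0-\mathcal{H}$ Hermitian while $\mathcal{H}$ itself is not. With the Berezin measure normalized as in \eqref{DET_}--\eqref{DET} (so that the Jacobian to the mode variables $c_n,\bar c_n$ is unity), integrating out $\psi$ produces $Z=\int dE\,\mathrm{Det}(i\partial_0-\mathcal{H})\,e^{-\frac{i}{16\pi G}\int d^4x\,R|E|}$.

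Next I would compute $\mathrm{Det}(i\partial_0-\mathcal{H})$ under anti-periodic-in-time boundary conditions. Given a solution $\zeta_\Omega$ of $(i\partial_0-\mathcal{H})\zeta=0$ with the Floquet property $\zeta_\Omega(t+T)=e^{-i\Omega T}\zeta_\Omega$, the functions $\Psi_{k,\Omega}=e^{i\frac{\pi}{T}(2k+1)t+i\Omega t}\zeta_\Omega$ are eigenfunctions of $i\partial_0-\mathcal{H}$ with eigenvalues $-\big(\frac{\pi}{T}(2k+1)+\Omega\big)$, and resumming the product over $k\in\Z$ by means of \eqref{DET0} gives $\mathrm{Det}(i\partial_0-\mathcal{H})=\mathrm{const}\prod_n\cos\frac{\Omega_n^E T}{2}$, the product being over the Floquet indices $\Omega_n^E T$. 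Writing each factor as $\tfrac12\big(e^{i\Omega_n^E T/2}+e^{-i\Omega_n^E T/2}\big)$ and collecting terms introduces the sum over occupation arrays $\{K_n\}$, $K_n\in\{0,1\}$, and yields \eqref{G_p_h_E_2}; this establishes the first assertion.

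For the second assertion I would observe that in \eqref{G_p_h_E_2} the gravitational part of the exponent carries the prefactor $m_P^2$, so for $E\ll m_P$ the $E$-integral is dominated by its critical points; for a fixed array $\{K_n\}$ the stationary-phase condition is $\delta\big\{-\sum_n(K_n-\tfrac12)\Omega_n^E T-m_P^2\int d^4x\,R|E|\big\}=0$. Re-expressing $\Omega_n^E T$ through $\Psi_{k,\Omega_n^E}$ and then through the wave functions $\zeta_{n,\lambda,E}$ as in \eqref{var2_}--\eqref{var2_2}, one checks that the $E$-variation of the fermionic term equals $\bar\zeta_{n,\lambda,E}\{\delta[i\slashed{D}-m]|E|\}\zeta_{n,\lambda,E}$ and does not involve $\delta\zeta_{n,\lambda,E}$, whereas varying $\zeta_{n,\lambda,E}$ subject to $\int d^4x\,|E|\,\zeta^+\zeta=T$ reproduces the Dirac equation \eqref{DIRAC}; hence the pair $(\zeta_{n,\lambda,E},E)$ is equivalently characterized by the single joint variational problem \eqref{var2}, whose final form is gauge invariant even though it was derived in a synchronous frame.

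The hard part is analytic rather than formal. One must show that the Floquet spectrum is discrete, so that the infinite products above converge — this relies on working at finite four-volume $V_4=\int d^4x\,|E|$, i.e. on the toroidal topology assumption — and, more seriously, one must make sense of the Euclidean functional integral, which as recorded in Remark~\ref{remark-above} diverges because of the wrong-sign kinetic term and requires a higher-derivative completion of the gravitational action before the exchange of the fermionic Berezin integral with the $dE$-integral and the stationary-phase evaluation acquire a precise meaning. Within the semiclassical window $E\ll m_P$, however, these steps are no less controlled than in the Coulomb case treated in Section~\ref{sectquant}.
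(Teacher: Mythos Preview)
Your proposal is correct and follows essentially the same route as the paper: fix a synchronous gauge, integrate out the fermions to obtain $\mathrm{Det}(i\partial_0-\mathcal{H})$, evaluate the determinant via the Floquet decomposition and the product formula \eqref{DET0} to arrive at \eqref{G_p_h_E_2}, and then apply stationary phase in the $m_P^2$-weighted exponent, rewriting the extremality condition through the $\zeta_{n,\lambda,E}$ wave functions to land on the joint variational problem \eqref{var2}. Your closing caveats about the discreteness of the Floquet spectrum and the need for a UV completion of the gravitational action are also in line with the paper's own remarks.
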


\begin{remark}
\label{remarknorm}
The variational problem (\ref{var2}) is gauge invariant,
while the normalization of the wave function is not. We need
\begin{equation}
\int\!\!d^4x\,|E| {\zeta}^+ \zeta = T,
\end{equation}
where the spinor $\zeta$ and the time extent $T$
are defined in a synchronous reference frame.
Here $T$ is a global characteristic of the space-time, $|E|d^4x\,$ is the
invariant $4 $-volume, while $\zeta^+\zeta = \bar{\zeta}E^0_a\gamma^a\zeta$ is
the time-component of the $4$-vector.
\end{remark}

\subsection{Gravitational one-polaron problem}

In order to investigate one-polaron states, we consider the two-point Green
function

\begin{equation}
G(t_2-t_1) = \frac{1}{Z}\int\!\!d \bar{\psi}\,d \psi\,d E\,
e^{
i\!\!\int\!\!d^4x\, |E| \bar{\psi} [i \slashed{D} -m] \psi
-\frac{i}{16 \pi G}
\int\!\!d^4x\,R |E|}
{\psi}^+(t_1,x) \psi(t_2,x)\,d^3x\,|E(t_2,x)|.
\label{GG_p_h_}
\end{equation}

Here again the system is considered in a synchronous reference frame.
It is implied that the corresponding terms are present
in the integration measure over $E$.
With all the notations introduced above, we can represent $G$ as follows:

$$
G(t_2-t_1)  =  \frac{1}{\tilde{Z}} \sum_q \sum_{\{K_n\} = 0,1} \int\!\!d
E\,
e^{
-i m_P^2 \int\!\!d^4x\,|E| R-i T
\sum_{n\ne q} [K_n-1/2] \Omega_n^{E}-i (t_2-t_1) \Omega_q^{E}
}
F^E(t_1,t_2),
$$

where
$$
F^E(t_1,t_2) =-\int [{\psi}^E_q(t_1,x)]^+ \psi^E_q(t_2,x)
|E(t_2, x)|\,d^3x,
$$
$$
\tilde{Z} = \sum_{\{K_n\} = 0,1}
\int\!\!d E\,
e^{-i m_P^2 \int\!\!d^4x\,|E| R-i T \sum_n (K_n-\frac 1 2)
\Omega_n^{E}}
.
$$
Here large values of $m_P$ allow us to calculate integrals over $E$
in the stationary phase approximation.
When varying the terms in the exponent, it is necessary to take
into account $F^E(t_1,t_2)$ which disturbs the effective action at
$t = t_1,t_2$. However, if $t_2-t_1 = T$, then, as in the previous section,
we come to the following simplification:

\begin{equation}
G(T)  =  \frac{1}{\tilde{Z}} \sum_q \sum_{\{K_n\} = 0,1} \int\!\!d E
\, {\rm exp}\Bigl(-i m_P^2 \int\!\!d^4x\,|E| R-i T \sum_{n\ne q} [K_n-1/2]
\Omega_n^{E}-i T \Omega_q^{E} \Bigr).
\label{GG_p_h_E_2}
\end{equation}

The lemma follows:

\begin{lemma}
In the quenched approximation, at the energies much less than $m_P$,
the one-polaron problem is reduced to the variational problem
\begin{equation}
0=\delta \int\!\!d^4x\,
\Big\{\bar{\zeta} [i \slashed{D}-m ] \zeta-m_P^2 R
\Big\}|E|.
\label{var21}
\end{equation}
Here $\zeta$ is the fermion wave function normalized according to
Remark~\ref{remarknorm}.
\end{lemma}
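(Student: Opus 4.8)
The plan is to reproduce, in the gravitational setting, the derivation of the phonon-case one-polaron variational problem \eqref{var___} (Lemma~\ref{polaronenergy}), replacing the energy levels $E_n^{\varphi}$ by the Floquet indices $\Omega_n^{E}$ and the term $\tfrac12\int(\bm\nabla\varphi)^2$ by the Einstein--Hilbert term $m_P^2\int|E|\,R$. The starting point is the Green-function representation \eqref{GG_p_h_E_2}. First I would impose the quenched approximation in two steps. (i) Neglect the virtual fermion--antifermion pairs: this amounts to dropping, or freezing at its flat-space value, the vacuum-determinant contribution $\sum_n\tfrac12\Omega_n^{E}$, which vanishes when $R=0$ since the Floquet indices then come in pairs of opposite sign. (ii) Freeze the Dirac sea: the surviving sum $\sum_{n\ne q}K_n^{\rm vac}\Omega_n^{E}$, with $K_n^{\rm vac}=\tfrac12(1-\mathop{\rm sign}\Omega_n)$, is taken to be insensitive to the weak gravitational field generated by the single fermion occupying state $q$, hence an $E$-independent constant that cancels the corresponding factor in $\tilde Z$, whose saddle is flat space (the gravitational analog of $\varphi_{\rm vac}=0$).

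After these cancellations the weight in \eqref{GG_p_h_E_2} collapses, for each $q$, to $\exp\bigl(-iT\bigl[\,m_P^2\!\int\! d^4x\,|E|R+\Omega_q^{E}\,\bigr]\bigr)$, and since we work at energies $E\ll m_P$ the remaining integral over the Vierbein is evaluated in the stationary-phase approximation, $G(T)\approx\sum_q Z_q e^{-iT{\cal E}_q}$. The saddle configuration $E_q$ and the polaron energy ${\cal E}_q=m_P^2\int d^4x\,|E_q|R+\Omega_q^{E_q}$ are then fixed by requiring that the first variation of the exponent vanish. That exponent is exactly the single-fermion--plus--gravity functional $\int d^4x\,\{\bar\zeta[i\slashed{D}-m]\zeta-m_P^2 R\}|E|$ --- with coefficient one on the Dirac term, just as in the phonon case, since the $\tfrac12$-weighted piece in \eqref{var2} was the vacuum contribution that was discarded --- so the stationarity condition is precisely \eqref{var21}. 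Varying $\zeta$ subject to the constraint $\int d^4x\,|E|\,\zeta^+\zeta=T$ of Remark~\ref{remarknorm} reproduces the Dirac equation \eqref{DIRAC} in the self-generated metric, while varying $E$ yields the Einstein equations with the one-fermion energy--momentum tensor; \eqref{var21} packages both.

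The main obstacle is the justification of the quenched approximation itself, i.e.\ controlling the two quantities just discarded: the one-loop fermion-determinant back-reaction on the metric (the $\tfrac12$-terms) and the polaron-induced distortion of the Dirac sea. As in the phonon discussion this should rest on the weak-coupling / large-$m_P$ hierarchy together with the condition $|{\cal E}_q-m|\ll m$, which guarantees that the self-generated gravitational field is too weak to create fermion--antifermion pairs with appreciable probability, so that the collection of one-fermion states remains the relevant description. A secondary, technical caveat --- already flagged in connection with the non-convergent Euclidean Einstein--Hilbert integral --- is that the Vierbein integral is not absolutely convergent, so the stationary-phase step must be read as a formal saddle-point expansion, or be given meaning through a higher-derivative regularization; and one must check that, although the synchronous-frame normalization of $\zeta$ is not gauge invariant, the constraint and the final variational problem \eqref{var21} are, which is exactly the content of Remark~\ref{remarknorm}.
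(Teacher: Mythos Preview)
Your proposal is correct and follows essentially the same route as the paper: the paper derives \eqref{GG_p_h_E_2} and then simply states ``The lemma follows,'' leaving implicit the quenched-approximation cancellation and stationary-phase step that you have (rightly) spelled out by analogy with the phonon case of Section~\ref{sectquant}. One cosmetic slip: from \eqref{GG_p_h_E_2} the exponent is $-i\big(m_P^2\!\int d^4x\,|E|R+T\,\Omega_q^{E}\big)$, not $-iT$ times the bracket you wrote, so your formula for ${\cal E}_q$ is off by a factor of $T$ on the Einstein--Hilbert piece; this does not affect the variational problem, since the $T\Omega_q^E$ term is precisely $\int d^4x\,|E|\,\bar\zeta[i\slashed{D}-m]\zeta$ under the normalization of Remark~\ref{remarknorm}.
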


\subsection{Newtonian limit}

In the nonrelativistic limit,
the energy-momentum tensor for the Dirac
field
is given by
$
T^{\mu\nu} = \bar{\zeta} i {\bm \gamma}^{\{\mu}
\partial^{\nu\}} \zeta\sim m {\zeta}^+\zeta\delta^{\mu 0}\delta^{\nu 0}.
$
The gravitational field is considered in the linear approximation
$g_{\mu\nu} =
\eta_{\mu\nu} + f_{\mu\nu}$,
where $\eta_{\mu\nu} = {\rm diag}\,(1,-1,-1,-1)$.
In the gauge $\partial^\mu h_{\mu\nu} = 0$
(where $h_{\mu\nu} = f_{\mu\nu} -
\frac{1}{2}\eta_{\mu\nu} f^\rho_\rho$)
Eq. (\ref{var21}) takes the form

\begin{equation}
0 = \delta \int\!\!d^4x\,
\left\{ \bar{\zeta}
\Big[i\slashed{\partial}
-m-\frac{i}{2}f^{\mu\nu}
{\gamma}_{\{\mu}\p_{\nu\}}\Big]
\zeta -\frac{m_P^2}{2}
\Big[ (\partial_\mu
f_{\nu\rho})^2-\frac{1}{2}(\partial_\mu f^\nu_{\nu})^2\Big]
\right\}.
\label{var3}
\end{equation}

Here the variation is performed with respect to the wave functions $\zeta$ and
with respect to the graviton cloud $f$.
As a result, the graviton cloud is formed in accordance with the
(linearized) Einstein equations with $T^{\mu\nu} = \bar{\zeta} i \gamma^{\{\mu}
\partial^{\nu\}} \zeta$.

In the nonrelativistic limit (we neglect gravitational waves that are not caused
by the given spinor field, see \cite[\S 99]{MR0475345}
on the Newtonian limit of general relativity and the definition of  $\phi$):
$$
f^0_0 = 2 \phi, \quad f^a_b = -2 \phi\, \delta^a_b,\quad a,\,b = 1,\,2,\,3;
$$
$$
i f^{\mu\nu}{\gamma}_{\{\mu}\p_{\nu\}} \zeta \sim 2 \phi m \gamma^0 \zeta.
$$
and $f^{0}_{a} = 0, f_0^a = 0$ for $a=1,2,3$.
We arrive at the following system of equations:
\begin{eqnarray}
&& \square \phi+\frac{m}{4 m_P^2} \bar{\zeta}\gamma^0 \zeta=0,
\nonumber \\
&& [i \slashed{\p}-m-m \phi \gamma^0 ] \zeta=0.
\end{eqnarray}

Next, we require that $\phi$ does not depend on time and denote
$ m \phi = e \varphi$,
$\charge = \frac{m}{2m_P}$.
As a result we arrive at
\begin{equation}\label{2dd-gr}
\left\{
\begin{array}{l}
i\p\sb{t}\zeta
=-i
\bm\alpha\!\cdot\!\bm\nabla
\zeta+m\beta\zeta
+\charge\varphi(x,t)\zeta(x,t)
\\
\Delta \varphi(x,t)=\charge\zeta\sp\ast(x,t)\zeta(x,t)
\end{array}
\right.
\end{equation}
with the normalization
\begin{equation}
\int\!\!d^4x\, \bar{\zeta}^{\prime} \gamma^t \zeta^{\prime} \sqrt{-g} = T.
\end{equation}
Here $\zeta^{\prime}$ is spinor field in a synchronous reference frame,
$\gamma^t
= E^0_a\gamma^a$ is the time component of covariant gamma-matrices
(also in a synchronous reference frame).

The normalization of the spinor field is the subject of careful
investigation. In the reference frame defined by the harmonic gauge,
both the spinor field and the gravitational field $\phi$ are independent of time.
However, in a general situation,
this is not the case in a synchronous reference frame.
 We may
represent $\bar{\zeta}^{\prime} \gamma^t \zeta^{\prime} = \bar{\zeta}
\gamma^{\mu} \zeta \frac{\partial [x^{\prime}]^0}{\partial x^{\mu}} =
J^{\mu}\frac{\partial [x^{\prime}]^0}{\partial x^{\mu}}$, where the current
$J^{\mu}= \bar{\zeta} \gamma^{\mu} \zeta$
is defined in the original reference frame.
In this frame, in the weak coupling, we have $\sqrt{-g}\sim
1-2\phi$ and $J^{\mu}\frac{\partial [x^{\prime}]^0}{\partial x^{\mu}}\sim
\zeta^+ \zeta $.
(Recall that
$g^{\mu\nu} \approx
\eta^{\mu\nu} - f^{\mu\nu}$, $g_{\mu\nu} \approx
\eta_{\mu\nu} + f_{\mu\nu}$, indices for $f^{\mu\nu}$ are lowered and lifted by
$\eta_{\mu\nu}$ and $\eta^{\mu\nu} $.
Then $f^{00} = 2\phi, f^{ab} = +2 \phi \delta^{ab}$.)
The latter follows from the Hamilton-Jacobi
equation that defines a synchronous reference frame
$g^{\mu\nu}\frac{\partial
[x^{\prime}]^0}{\partial x^{\mu}}\frac{\partial [x^{\prime}]^0}{\partial
x^{\nu}}=1$.
In the Newtonian approximation $\phi \ll 1$,
and  $\frac{\partial [x^{\prime}]^0}{\partial x^{\nu}}, \nu = 1,2,3$
are of the same order as $\phi$.
Therefore, up to the terms linear in $\phi$,
we get $g^{00}[\partial_0 [x^{\prime}]^0]^2=[E^{0}_0 \partial_0 [x^{\prime}]^0]^2=1$.
Also $J^{\mu}, \mu = 1,2,3$ are of the same order as $\phi$.
Therefore, up to the terms linear in $\phi$,
we have $J^{\mu}\frac{\partial [x^{\prime}]^0}{\partial x^{\mu}}
= J^{0} \frac{\partial [x^{\prime}]^0}{\partial x^{0}}
= \bar{\zeta} \gamma^0 \zeta E^0_0 \partial_0 [x^{\prime}]^0 \sim
\zeta^+ \zeta $.

That is why we come to the following normalization (valid in
original reference frame in the weak coupling):
\begin{equation}
\int\!\!d^3x\,\zeta^+ \zeta \sqrt{-g}
\approx \int\!\!d^3x\,\zeta^+ \zeta
\Big(1-\frac{1}{m_P}\varphi\Big) = 1.
\label{constrntgrv}
\end{equation}

\begin{lemma}
In the Newtonian limit in the harmonic gauge,
the gravitational polaron problem is
reduced to the system of equations (\ref{2dd}), (\ref{maxwell-0}), (\ref{constrntgrv})
with $\charge =
\frac{m}{2m_P}$.
\end{lemma}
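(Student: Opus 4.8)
The plan is to organize the computation that precedes the statement into a proof: one specializes the variational problem of the preceding lemma, Eq.~(\ref{var21}), to a weak, static gravitational field in the harmonic gauge and reads off the claimed system together with the claimed normalization. First I would linearize, writing $g_{\mu\nu}=\eta_{\mu\nu}+f_{\mu\nu}$ and imposing $\partial^\mu h_{\mu\nu}=0$ with $h_{\mu\nu}=f_{\mu\nu}-\frac12\eta_{\mu\nu}f^\rho_\rho$. Expanding $|E|R$ to quadratic order in $f$ and the Dirac term $\bar\zeta[i\slashed D-m]\zeta\,|E|$ to first order in $f$ (so that $\slashed D\to\slashed\partial-\frac{i}{2}f^{\mu\nu}\gamma_{\{\mu}\partial_{\nu\}}$ and the spin-connection contribution drops at this order) turns (\ref{var21}) into the functional displayed in Eq.~(\ref{var3}). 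Independent variation in $\zeta$ gives the Dirac equation in the background $f$, and variation in $f$ gives the linearized Einstein equations with source $T^{\mu\nu}=\bar\zeta\,i\gamma^{\{\mu}\partial^{\nu\}}\zeta$.

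Next I would impose the Newtonian ansatz $f^0_0=2\phi$, $f^a_b=-2\phi\,\delta^a_b$ ($a,b=1,2,3$) with $f^0_a=f^a_0=0$, discard the free gravitational radiation not generated by $\zeta$ (cf.~\cite[\S 99]{MR0475345}), and use the nonrelativistic reduction $T^{\mu\nu}\sim m\,\zeta^+\zeta\,\delta^{\mu0}\delta^{\nu0}$ together with $if^{\mu\nu}\gamma_{\{\mu}\partial_{\nu\}}\zeta\sim2\phi m\gamma^0\zeta$. The two Euler--Lagrange equations then collapse to $\square\phi+\frac{m}{4m_P^2}\bar\zeta\gamma^0\zeta=0$ and $[i\slashed\partial-m-m\phi\gamma^0]\zeta=0$; requiring $\phi$ to be time-independent replaces $\square\phi$ by $\Delta\phi$ (up to sign), and setting $m\phi=\charge\varphi$ with $\charge=\frac{m}{2m_P}$ brings this pair exactly into Eqs.~(\ref{2dd}), (\ref{maxwell-0}), i.e. the system (\ref{2dd-gr}).

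It remains to translate the normalization. The variational problem (\ref{var21}) carries the synchronous-frame constraint of Remark~\ref{remarknorm}, namely $\int d^4x\,|E|\,\bar\zeta'\gamma^t\zeta'=T$ with $\gamma^t=E^0_a\gamma^a$. Writing $\bar\zeta'\gamma^t\zeta'=J^\mu\,\partial[x']^0/\partial x^\mu$ with $J^\mu=\bar\zeta\gamma^\mu\zeta$ in the harmonic frame, I would invoke the Hamilton--Jacobi equation $g^{\mu\nu}\partial_\mu[x']^0\,\partial_\nu[x']^0=1$ that defines the synchronous time, together with the weak-field data $\sqrt{-g}\sim1-2\phi$ and $f^{00}=2\phi$, and the fact that $\partial_\nu[x']^0$ and $J^\mu$ for $\nu,\mu=1,2,3$ are $O(\phi)$. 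Keeping terms through first order in $\phi$ yields $g^{00}\big(\partial_0[x']^0\big)^2=1$ and $J^\mu\,\partial[x']^0/\partial x^\mu\sim\zeta^+\zeta$, whence $\int d^3x\,\zeta^+\zeta\,\sqrt{-g}\approx\int d^3x\,\zeta^+\zeta\,(1-2\phi)=\int d^3x\,\zeta^+\zeta\,(1-\varphi/m_P)=1$, which is exactly (\ref{constrntgrv}) after using $\phi=\varphi/(2m_P)$.

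I expect the main obstacle to be this last step: the careful bookkeeping of the change between the synchronous reference frame (in which the variational problem and its normalization were derived) and the harmonic gauge (in which $\phi$ and $\zeta$ are static), making sure that every quantity is expanded consistently to the same order in $\phi$ and that the factors of $m$, $m_P$ and $\charge$ match. The first two paragraphs are a routine gauge-fixed, low-velocity expansion of the Einstein--Dirac action; the physical content — that the Newtonian gravitational self-interaction of a Dirac fermion is an attractive Coulomb interaction with $\charge=m/(2m_P)$ — is already contained in the preceding derivations, so the work lies in verifying the identifications rather than in a new idea.
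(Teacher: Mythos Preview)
Your proposal is correct and follows essentially the same route as the paper: linearize the Einstein--Dirac variational problem (\ref{var21}) in the harmonic gauge to obtain (\ref{var3}), impose the Newtonian ansatz for $f_{\mu\nu}$ and the nonrelativistic reduction of $T^{\mu\nu}$ to get the coupled $\square\phi$/Dirac system, then substitute $m\phi=\charge\varphi$ with $\charge=m/(2m_P)$ and convert the synchronous-frame normalization via the Hamilton--Jacobi equation to first order in $\phi$. Your identification of the delicate point---the consistent first-order bookkeeping between the synchronous and harmonic frames in the normalization---is exactly where the paper spends its effort, and your factor-tracking $2\phi=\varphi/m_P$ matches (\ref{constrntgrv}).
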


\section{Stability of solitary waves in the Dirac--Coulomb system}
\label{sect-stability}

\subsection{Existence of solitary waves}

In this section,
we substitute
$\zeta$ by $\frac{1}{\charge}\zeta$
and
$\varphi$ by $\frac{1}{\charge}\varphi$,
so that $\charge$ disappears from the system
\eqref{2dd-0}.
As a result,
instead of the normalization condition (\ref{constrnt})
(polarons in condensed matter systems) we have
\begin{equation}
\int\!\!d^3x\,
\zeta^+ \zeta = \charge^2.
\label{constrnt2}
\end{equation}
For gravitational polarons, we have the constraint
\begin{equation}
\int\!\!d^3x\,
\zeta^+ \zeta
\Big(1-\frac{1}{e m_P} \varphi\Big) = \charge^2.
\label{constrnt2gr}
\end{equation}

We express $\varphi=\Delta^{-1}\abs{\zeta}^2$,
where
$\Delta^{-1}$ in $\R^3$
is the operator of convolution with $-\frac{1}{4\pi \abs{x}}$,
and write the Dirac--Coulomb system \eqref{2dd-0}
as the following Dirac--Choquard equation:
\begin{equation}\label{2dd-1}
i\p\sb{t}\zeta
=-i
\bm\alpha\!\cdot\!\bm\nabla
\zeta+m\beta\zeta
+\zeta\Delta^{-1}\abs{\zeta}^2,
\end{equation}
where
$\zeta(x,t)\in\C^{4}$,
$x\in\R^3$.
The solitary wave solutions
$\phi\sb\omega e^{-i\omega t}$
with $\omega\lesssim m$
can be constructed by rescaling
from the solutions to the nonrelativistic
limit of the model.
Such a method was employed
in
\cite{MR1750047,2008arXiv0812.2273G}
for the nonlinear Dirac equation
and in
\cite{MR2593110,MR2671162,MR2647868}
for the Einstein--Dirac system
and the Einstein--Dirac--Maxwell system;
for the Dirac--Maxwell system,
this approach has been implemented in
\cite{dm-existence}.
Let us mention that the solitary wave
solutions to \eqref{2dd-1}
with $\omega\lesssim m$
correspond to the
solitary wave solutions
of the Dirac--Maxwell system
with $\omega\gtrsim -m$
when the magnetic field is neglected;
such solitary waves were numerically obtained
in \cite{MR1364144}.
The sign change of $\omega$
is due to the different
sign of the self-interaction:
in the Dirac--Maxwell system,
the self-interaction is repulsive
for $\omega\lesssim m$
and attractive
for $\omega\gtrsim -m$;
in the Dirac--Choquard equation \eqref{2dd-1}, it is the opposite.
The profile
of the solitary wave
$\zeta(x,t)=\phi\sb\omega(x)e^{-i\omega t}$
satisfies
\begin{equation}\label{omega-phi-is}
\omega\phi\sb\omega
=-i
\bm\alpha\!\cdot\!\bm\nabla
\phi\sb\omega+m\beta\phi\sb\omega+
\phi\sb\omega
\Delta^{-1}\abs{\phi\sb\omega}^2.
\end{equation}
Let
$
\phi\sb\omega(x)=\begin{bmatrix}\phi\sb e(x,\omega)\\\phi\sb p(x,\omega)\end{bmatrix},
$
with
$
\phi\sb e,\ \phi\sb p\in\C^2$
the ``electron'' and ``positron'' components.
In terms of
$\phi\sb e$ and $\phi\sb p$,
\eqref{omega-phi-is} is written as
\begin{eqnarray}
\label{omega-phi-is-2}
\omega\phi\sb{e}
=-i\bm\sigma\!\cdot\!\bm\nabla\phi\sb{p}+m\phi\sb{e}+\phi\sb{e}
\Delta^{-1}\big(\abs{\phi\sb{e}}^2+\abs{\phi\sb{p}}^2\big),
\nonumber
\\
\omega\phi\sb{p}
=-i\bm\sigma\!\cdot\!\bm\nabla\phi\sb{e}-m\phi\sb{p}+\phi\sb{p}
\Delta^{-1}\big(\abs{\phi\sb{e}}^2+\abs{\phi\sb{p}}^2\big),
\end{eqnarray}
where
$\bm\sigma\!\cdot\!\bm\nabla=\sum\sb{j=1}\sp{3}\sigma\sb j\p\sb j$,
with
$\sigma\sb j$ the Pauli matrices.
Let $\epsilon>0$
be such that $\epsilon^2=m^2-\omega^2$.
We introduce functions
$\varPhi\sb{e}(y,\epsilon),\ \varPhi\sb{p}(y,\epsilon)\in\C^2$
by the relations
$$
\phi\sb{e}(x,\omega)=\epsilon^2\varPhi\sb{e}(\epsilon x,\epsilon),
\quad
\phi\sb{p}(x,\omega)=\epsilon^3\varPhi\sb{p}(\epsilon x,\epsilon).
$$
Let
$\bm\nabla\sb{y}$,
$\Delta\sb{y}$
be the
gradient
and the Laplacian
with
respect to the coordinates $y=\epsilon x$,
so that
$\ \bm\nabla\sb{x}=\epsilon\bm\nabla\sb{y}$,
$\ \Delta\sb{x}=\epsilon^2\Delta\sb{y}$.
Then equations
\eqref{omega-phi-is-2}
take the form
\begin{equation}\label{sys-phi1}
-\frac{\varPhi\sb{e}}{m+\omega}
=-i
\bm\sigma\!\cdot\!\bm\nabla\sb y
\varPhi\sb{p}+\varPhi\sb{e}
\Delta\sb{y}^{-1}(\abs{\varPhi\sb{e}}^2+\epsilon^2\abs{\varPhi\sb{p}}^2),
\end{equation}
\begin{equation}\label{sys-phi2}
(m+\omega)\varPhi\sb{p}
=-i
\bm\sigma\!\cdot\!\bm\nabla\sb y
\varPhi\sb{e}+\epsilon^2\varPhi\sb{p}
\Delta\sb{y}^{-1}(\abs{\varPhi\sb{e}}^2+\epsilon^2\abs{\varPhi\sb{p}}^2).
\end{equation}
Let $u\in H\sp\infty(\R^3,\R)$
be a spherically symmetric strictly positive smooth solution to
the Choquard equation,
\begin{equation}\label{pc}
-\frac{1}{2m}u
=-\frac{1}{2m}\Delta u+u\Delta^{-1}u^2;
\end{equation}
such a solution exists due to
\cite{MR0471785,MR591299,MR677997}.
Pick a unit vector $\bm{n}\in\C^2$.
Then
$$
\hat\varPhi\sb e=\bm{n}u\in H\sp\infty(\R^3,\C^4),
$$
$$
\hat\varPhi\sb p=
-\frac{1}{2m}
i
\bm\sigma\!\cdot\!\bm\nabla\sb y
\hat\varPhi\sb e\in H\sp\infty(\R^3,\C^4)
$$
is a solution to
\eqref{sys-phi1}, \eqref{sys-phi2}
corresponding to $\epsilon=0$.
By \cite{dm-existence},
the perturbation theory
allows to construct solutions
to \eqref{omega-phi-is}
with $\omega\in(\omega\sb 0,m)$,
with some $\omega\sb 0<m$,
such that
\begin{equation}\label{phi-asymptotics}
\phi\sb\omega(x)
=
\begin{bmatrix}
\phi\sb e(x,\omega)\\\phi\sb p(x,\omega)
\end{bmatrix}
=
\begin{bmatrix}
\epsilon^2\hat\varPhi\sb e(\epsilon x)+o(\epsilon^2)
\\
\epsilon^3\hat\varPhi\sb p(\epsilon x)+o(\epsilon^3)
\end{bmatrix},
\end{equation}
where $\omega$ and $\epsilon$
are related by $\omega=\sqrt{m^2-\epsilon^2}$.

\begin{remark}
Let us mention that among the solitary waves
considered above,
only those with the discrete values
$\omega_{n, \kappa, \charge}\in(\omega\sb 0,m)$
satisfy the constraint (\ref{constrnt2})
(or the constraint (\ref{constrnt2gr})).
These values are parametrized by the number of nodes $n$
of the corresponding solution to the Choquard equation \eqref{pc},
the quantum number $\kappa=\pm 1$,
and the value of charge $\charge$ that enters the constraint (\ref{constrnt2})
(or (\ref{constrnt2gr})).
In the context of the Dirac--Maxwell system,
this pattern is described in detail
in \cite{MR1364144}.
\end{remark}

\subsection{Linear stability of solitary waves}
\label{subsect-stability}

We assume that $\omega\sb 0<m$
is such that
for $\omega\in(\omega\sb 0,m)$
there are solitary wave solutions
$\phi\sb\omega(x)e^{-i\omega t}$
to \eqref{2dd-1}.
Taking the Ansatz
$\zeta(x,t)=(\phi\sb\omega(x)+\rho(x,t))e^{-i\omega t}$,
we derive
the linearization at the solitary wave
$\phi\sb\omega(x)e^{-i\omega t}$:
$$
i\dot\rho
=(
D\sb m
-\omega+\Delta^{-1}\abs{\phi\sb\omega}^2)\rho
+\Delta^{-1}(\rho\sp\ast\phi\sb\omega+\phi\sb\omega\sp\ast\rho)\phi\sb\omega,
$$
where
$D\sb m=-i
\bm\alpha\!\cdot\!\bm\nabla+m\beta$.
We are looking for the eigenvalues
of the linearization operator in the right-hand side.
That is, we substitute
$\rho(x,t)=\xi(x)e^{\lambda t}$,
with $\xi\in L\sp 2(\R^3,\C^4)$,
$\xi\not\equiv 0$,
getting
\begin{equation}\label{it-it}
i\lambda\xi
=(
D\sb m
-\omega+\Delta^{-1}\abs{\phi\sb\omega}^2)\xi
+\phi\sb\omega\Delta^{-1}(\xi\sp\ast\phi\sb\omega+\phi\sb\omega\sp\ast\xi),
\end{equation}
and we would like to know possible values of $\lambda$.
If there is $\Re\lambda>0$
corresponding to $\xi\not\equiv 0$,
then the linearization at a solitary wave is linearly unstable,
and we would expect that the solitary wave is
(``dynamically'') unstable under perturbations of the initial data.

\begin{theorem}\label{theorem-52}
There exists $\omega\sb 1\in [\omega\sb 0,m)$
such that
the ``no-node'' solitary waves with
$\omega\in(\omega\sb 1,m)$ are linearly stable,
so that there are no solution
$\lambda\in\C$, $\xi\in L^2(\R^3,\C^4)$
to \eqref{it-it}
with $\Re\lambda\ne 0$ and $\xi(x)$ not identically zero.
\end{theorem}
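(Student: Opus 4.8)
The plan is to recast \eqref{it-it} as the eigenvalue problem for the linearization operator $\mathcal{L}_\omega$ acting on $L^2(\R^3,\C^4)$, $\mathcal{L}_\omega\xi=-i\big[(D_m-\omega+\Delta^{-1}\abs{\phi_\omega}^2)\xi+\phi_\omega\Delta^{-1}(\xi^\ast\phi_\omega+\phi_\omega^\ast\xi)\big]$ (real-linear because of the $\xi^\ast$ term; one complexifies it in the usual way, adjoining the conjugate of \eqref{it-it}), so that a nonzero $\xi$ solving \eqref{it-it} exists for a given $\lambda$ precisely when $\lambda\in\sigma_{\mathrm p}(\mathcal{L}_\omega)$. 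Because $\phi_\omega$ is exponentially localized, the potential $\Delta^{-1}\abs{\phi_\omega}^2$ is bounded and decaying and the nonlocal term is relatively $D_m$-compact, so Weyl's theorem gives $\sigma_{\mathrm{ess}}(\mathcal{L}_\omega)=\{i\nu:\nu\in\R,\ \abs\nu\ge m-\omega\}$: the spectral gap about $0$ has width $2(m-\omega)\to 0$ as $\omega\to m$. The Hamiltonian and time-reversal symmetries of the linearization make $\sigma(\mathcal{L}_\omega)$ invariant under $\lambda\mapsto-\lambda$ and $\lambda\mapsto\overline\lambda$, so it suffices to exclude eigenvalues with $\Re\lambda>0$. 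Finally, writing $\epsilon=\sqrt{m^2-\omega^2}$, the asymptotics \eqref{phi-asymptotics} show that $\mathcal{L}_\omega=-i(D_m-\omega)+\mathcal{O}(\epsilon^2)$ in operator norm, and since $-i(D_m-\omega)$ is normal, every eigenvalue with $\Re\lambda\ne0$ lies within $C\epsilon^2$ of $\{i\nu:\abs\nu\ge m-\omega\}$; in particular it is close either to $0$ or to the continuous spectrum. These two possibilities are treated separately, and the second is the hard one.

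Next I would pass to the nonrelativistic limit, as in the existence argument of the previous subsection. Rescaling $y=\epsilon x$ and $\lambda=\epsilon^2\Lambda$ and using \eqref{phi-asymptotics} --- so $\phi_\omega$ is, to leading order, $\epsilon^2u(\epsilon x)$ in the electron component and $\mathcal{O}(\epsilon^3)$ in the positron component, with $u>0$ the no-node (ground-state) solution of the Choquard equation \eqref{pc} --- and eliminating the positron component (which, the $\pm m$ eigenspaces of $\beta$ being separated by $2m$, is slaved to the electron one modulo $\mathcal{O}(\epsilon^2)$), the operator $\epsilon^{-2}\mathcal{L}_\omega$ converges on the electron component, as $\epsilon\to0$, to the linearization $\mathcal{L}_0$ of \eqref{pc} at $u$: the Hamiltonian operator built from $L_-=-\tfrac1{2m}\Delta+\tfrac1{2m}+\Delta^{-1}u^2$ (with $L_-u=0$) and $L_+=L_-+2u\,\Delta^{-1}(u\,\cdot\,)$ (with $L_+\partial_j u=0$), while the remaining continuous spectrum recedes to $\{i\nu:\abs\nu\ge\tfrac1{2m}\}$. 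That this convergence is norm-resolvent on compact subsets of the resolvent set of $\mathcal{L}_0$ is the content of the spectral-deformation result of \cite{dirac-spectrum} applied to the present bifurcation (built following \cite{dm-existence}).

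Since $u$ is the ground state, $L_-\ge0$, $L_+$ has exactly one negative eigenvalue, $\ker L_-=\mathrm{span}\,u$, $\ker L_+=\mathrm{span}\{\partial_j u\}$, and the Vakhitov--Kolokolov criterion \cite{VaKo} --- whose hypothesis $\frac{d}{d\nu}\Norm{u_\nu}_{L^2}^2\ne0$ is verified in the Appendix --- gives $\sigma(\mathcal{L}_0)\subset i\R$, with $0$ an eigenvalue whose generalized eigenspace is spanned precisely by the phase mode, the three translation modes, and their Jordan partners, the phase block having length exactly $2$. Transferring back through \cite{dirac-spectrum}, for $\omega$ near $m$ every eigenvalue of $\epsilon^{-2}\mathcal{L}_\omega$ lying in a fixed compact subset of the resolvent set of $\mathcal{L}_0$ converges with multiplicity to an eigenvalue of $\mathcal{L}_0$, which is purely imaginary; hence for any $\delta>0$ there is no such eigenvalue with $\Re\Lambda\ge\delta$ once $\omega$ is close enough to $m$, which settles the case $\abs\lambda=\mathcal{O}(\epsilon^2)$ away from $0$. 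The point $\lambda=0$ itself is handled directly: differentiating \eqref{omega-phi-is} in $\omega$ and commuting $\mathcal{L}_\omega$ with $x_j$ exhibits the generalized kernel $\{i\phi_\omega,\ \partial_j\phi_\omega,\ \partial_\omega\phi_\omega,\ x_j\phi_\omega\}$, and the absence of a longer Jordan chain --- hence of eigenvalues departing from $0$ with $\Re\lambda\ne0$ --- follows from $\frac{d}{d\omega}\Norm{\phi_\omega}_{L^2}^2\ne0$, which holds for $\omega$ near $m$ because that quantity equals, after rescaling, a nonzero multiple of $\frac{d}{d\nu}\Norm{u_\nu}_{L^2}^2$.

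The remaining case --- eigenvalues with $\Re\lambda>0$ lying within $C\epsilon^2$ of the continuous spectrum $\{i\nu:\abs\nu\ge m-\omega\}$, hence bifurcating out of it as the gap closes --- is the main obstacle. Writing $z=\omega+i\lambda$, such $\lambda$ drive $z$ toward the essential spectrum of $D_m$, and a nonzero solution of \eqref{it-it} forces a Birman--Schwinger-type identity relating $1$ to an eigenvalue of a compact operator assembled from $(D_m-z)^{-1}$ sandwiched between (square roots of) the potential $\Delta^{-1}\abs{\phi_\omega}^2$ and the nonlocal kernel. The limiting absorption principle for the free Dirac operator --- the boundary values $(D_m-z)^{-1}\big|_{z\to\mu\pm i0}$ exist and are bounded from $L^2_s$ to $L^2_{-s}$, $s>1/2$, with the degeneracy carefully controlled at the thresholds $z=\pm m$ (which is where $\lambda$ near the closing gap, and $\lambda$ near $2mi$, respectively, send $z$) --- bounds this operator; the real work is to make the bound uniform in $\epsilon$ after the $y=\epsilon x$ rescaling, using that the potential, although only $\mathcal{O}(\epsilon^2)$ in $L^\infty$, is effectively supported on the scale $1/\epsilon$ and is small in the weighted norm the principle demands, so that the Birman--Schwinger operator has norm strictly less than $1$ in a neighbourhood of the continuous spectrum, uniformly for $\omega$ sufficiently close to $m$. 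Combining this with the previous steps produces $\omega_1\in[\omega_0,m)$ such that for $\omega\in(\omega_1,m)$ the operator $\mathcal{L}_\omega$ has no eigenvalue off the imaginary axis, which is Theorem~\ref{theorem-52}. The whole difficulty is concentrated in this uniform limiting-absorption estimate up to $\omega=m$: it is exactly what keeps an unstable eigenvalue from being created out of the continuous spectrum as the spectral gap closes.
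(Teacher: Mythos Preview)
Your approach is correct and uses the same essential ingredients as the paper's proof --- the limiting absorption principle for the free Dirac operator, the nonrelativistic rescaling to the Choquard linearization, and the Vakhitov--Kolokolov criterion --- though organized differently. The paper first uses limiting absorption to show that any accumulation point of eigenvalues $\lambda_k\in\sigma_p(\eubJ\eubL(\omega_k))$ as $\omega_k\to m$ must lie in the threshold set $\{0,\pm 2mi\}$, then rules out $\pm 2mi$ (a limit there would force an $L^2$ eigenfunction of the constant-coefficient operator $\eubJ(\eubD_m-m)$), and finally treats $\lambda_k\to 0$ by first proving $\lambda_k=O(m-\omega_k)$ and then rescaling; you instead open with a crude normal-operator perturbation bound, handle $|\lambda|=O(\epsilon^2)$ by rescaling and Vakhitov--Kolokolov (with an explicit Jordan-block count at $\lambda=0$ that the paper does not spell out), and postpone the limiting-absorption/Birman--Schwinger argument to the end to exclude bifurcation from the rest of the continuous spectrum. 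One point of emphasis differs: the paper calls the near-zero case the ``most involved'' --- in particular the step $\lambda_k\to 0\Rightarrow\lambda_k=O(m-\omega_k)$ is itself nontrivial, and you pass over it by simply writing $\lambda=\epsilon^2\Lambda$ --- whereas you locate the main obstacle in the uniform limiting-absorption estimate; both are delicate, and both accounts ultimately defer the full details to \cite{dirac-spectrum}.
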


\begin{remark}
By the ``no-node'' solitary waves we mean the solutions
\eqref{phi-asymptotics}
constructed from the strictly positive solution
to the Choquard equation \eqref{pc}.
\end{remark}

Let us mention that such ``no-node''
solutions that satisfy the constraint (\ref{constrnt2})
(or the constraint (\ref{constrnt2gr}))
only exist if the value of $\charge^2$ is sufficiently small.
That is why the above results on the existence and stability
may be reformulated as follows:

\begin{theorem}\label{theorem-52-2}
There is $\\charge_0^2>0$
such that
the Dirac--Coulomb system \eqref{2dd-0}
with $\charge^2\in(0,\\charge_0^2)$
has ``no-node'' solitary wave solutions
$\zeta(x,t)=\phi(x)e^{-i\omega t}$
which satisfy
the constraint (\ref{constrnt2}) (or (\ref{constrnt2gr}))
and are linearly stable.
\end{theorem}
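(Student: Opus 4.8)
The plan is to reduce Theorem~\ref{theorem-52-2} to Theorem~\ref{theorem-52}. Along the ``no-node'' family $\{\phi\sb\omega\}\sb{\omega\in(\omega\sb 0,m)}$ of \eqref{phi-asymptotics}, introduce the charge $Q(\omega)=\int\sb{\R^3}\phi\sb\omega\sp\ast\phi\sb\omega\,d^3x$. The key point is that $Q(\omega)\to 0^+$ as $\omega\to m^-$: hence the normalization \eqref{constrnt2} can be met with $\omega$ arbitrarily close to $m$, which is exactly the regime in which Theorem~\ref{theorem-52} yields linear stability. The details run as follows.

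First I would compute the leading behaviour of $Q$. Writing $\epsilon=\sqrt{m^2-\omega^2}$ and inserting \eqref{phi-asymptotics}, the change of variables $y=\epsilon x$ gives $\int\Abs{\phi\sb e(x,\omega)}^2\,d^3x=\epsilon^4\int\Abs{\hat\varPhi\sb e(\epsilon x)}^2\,d^3x+o(\epsilon)=\epsilon\Norm{u}\sb{L\sp 2(\R^3)}^2+o(\epsilon)$ (here $\Abs{\bm n}=1$, so $\Norm{\hat\varPhi\sb e}\sb{L\sp 2}=\Norm{u}\sb{L\sp 2}$), while the ``positron'' component $\phi\sb p=\epsilon^3\hat\varPhi\sb p(\epsilon\,\cdot)+o(\epsilon^3)$ contributes only at order $\epsilon^3$. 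Hence $Q(\omega)=\epsilon\Norm{u}\sb{L\sp 2}^2+o(\epsilon)$ as $\omega\to m^-$; in particular $Q$ is continuous and strictly positive on $(\omega\sb 0,m)$, with $Q(\omega)\to 0^+$. For the gravitational normalization \eqref{constrnt2gr} one replaces $Q$ by $\tilde Q(\omega)=\int\phi\sb\omega\sp\ast\phi\sb\omega\bigl(1-(\charge m\sb P)^{-1}\varphi\sb\omega\bigr)\,d^3x$ with $\varphi\sb\omega=\Delta^{-1}\Abs{\phi\sb\omega}^2$; the same rescaling gives $\Norm{\varphi\sb\omega}\sb{L^\infty}=O(\epsilon)$, so the correction term is $O(\epsilon)\cdot Q(\omega)=O(\epsilon^2)=o(\epsilon)$ and $\tilde Q(\omega)=\epsilon\Norm{u}\sb{L\sp 2}^2+o(\epsilon)$ as well.

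It follows that there are $\delta>0$ and $\omega'\in(\omega\sb 0,m)$ such that for every $\charge^2\in(0,\delta)$ one can pick $\omega(\charge)\in(\omega',m)$ with $Q(\omega(\charge))=\charge^2$ (resp.\ $\tilde Q(\omega(\charge))=\charge^2$); moreover $\omega(\charge)\to m$ as $\charge\to 0^+$, since $Q$ (resp.\ $\tilde Q$) is continuous and stays bounded away from $0$ on compact subsets of $(\omega\sb 0,m)$. For such $\charge$ the function $\phi\sb{\omega(\charge)}(x)e^{-i\omega(\charge)t}$ --- a no-node solitary wave of \eqref{2dd-1}, equivalently of \eqref{2dd-0} after the rescaling of Section~\ref{sect-stability} --- satisfies \eqref{constrnt2} (resp.\ \eqref{constrnt2gr}). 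Let $\omega\sb 1\in[\omega\sb 0,m)$ be the threshold from Theorem~\ref{theorem-52}. Since $\omega(\charge)\to m$, we may fix $\charge\sb 0>0$ with $\charge\sb 0^2\le\delta$ such that $\charge^2\in(0,\charge\sb 0^2)$ forces $\omega(\charge)\in(\omega\sb 1,m)$; then the solitary wave just constructed is linearly stable by Theorem~\ref{theorem-52}, i.e.\ \eqref{it-it} has no solution with $\Re\lambda\ne 0$ and $\xi\not\equiv 0$. This is precisely the statement of Theorem~\ref{theorem-52-2}.

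The one genuinely technical ingredient beyond Theorem~\ref{theorem-52} is the asymptotics $Q(\omega)=\epsilon\Norm{u}\sb{L\sp 2}^2+o(\epsilon)$, and I expect this to be the point requiring the most care: it needs the remainders in \eqref{phi-asymptotics} to be controlled in $L\sp 2$ after the $\epsilon$-dependent rescaling, not merely pointwise. Such control is, however, built into the perturbative construction of \cite{dm-existence}: in the variable $y=\epsilon x$ the remainder has the form $\epsilon^2 r(\epsilon\,\cdot)$ with $\Norm{r}\sb{H\sp 1(\R^3)}=o(1)$, and $\Norm{\epsilon^2 r(\epsilon\,\cdot)}\sb{L\sp 2}^2=\epsilon\Norm{r}\sb{L\sp 2}^2=o(\epsilon)$. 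Everything else is elementary, the substantive spectral analysis being supplied by Theorem~\ref{theorem-52}.
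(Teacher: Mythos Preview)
Your proposal is correct and follows exactly the route the paper takes: the paper states Theorem~\ref{theorem-52-2} as a direct reformulation of Theorem~\ref{theorem-52}, justified by the scaling $\int\abs{\phi\sb\omega}^2\,d^3x\sim\epsilon\sim(m-\omega)^{1/2}$ recorded in the remark immediately following the theorem. Your write-up supplies more detail on this reduction than the paper does (in particular the $L^2$ control of the remainders in \eqref{phi-asymptotics} and the intermediate-value argument producing $\omega(\charge)$); one small imprecision is that in the gravitational constraint \eqref{constrnt2gr} the correction carries a factor $(\charge m_P)^{-1}$ which you silently absorb---this is harmless since $\charge\sim\epsilon^{1/2}$ on the relevant branch, but it is worth making explicit.
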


\begin{remark}
According to the scaling in the Ansatz (\ref{phi-asymptotics}),
one has
$\int\abs{\phi\sb\omega}^2\,d^3x\sim\epsilon\sim(m-\omega)^{1/2}$,
hence $\charge^2$ and $\omega$
in Theorem~\ref{theorem-52-2}
are related by
$$
\charge^2\sim (m-\omega)^{1/2},
\qquad
\omega\lesssim m.
$$
\end{remark}

We point out that \eqref{it-it}
is $\R$-linear but not $\C$-linear,
because of the presence of $\xi\sp\ast$.
Let us rewrite \eqref{it-it}
in the $\C$-linear form.
For this,
we introduce the following notations:
$$
\bmupxi=\begin{bmatrix}\Re\xi\\\Im\xi\end{bmatrix},
\quad
\bmupphi\sb\omega=\begin{bmatrix}\Re\phi\sb\omega\\\Im\phi\sb\omega\end{bmatrix};
\qquad
\eubJ=\begin{bmatrix}0&I\sb{4}\\-I\sb{4}&0\end{bmatrix},
$$
$$
\bmupalpha\sp j=\begin{bmatrix}
\Re\alpha\sp j&-\Im\alpha\sp j
\\
\Im\alpha\sp j&\Re\alpha\sp j
\end{bmatrix},
\quad
\bmupbeta=\begin{bmatrix}
\Re\beta&-\Im\beta
\\
\Im\beta&\Re\beta
\end{bmatrix}.
$$
Then
\eqref{it-it} can be written as
\begin{equation}\label{it-it-it}
\lambda\bmupxi
=\eubJ\eubL(\omega)\bmupxi,
\end{equation}
where
$$
\eubL(\omega)\bmupxi
=
\big(\eubD\sb m-\omega
+\Delta^{-1}\abs{\bmupphi\sb\omega}^2
\big)\bmupxi
+2\bmupphi\sb\omega\Delta^{-1}(\bmupphi\sb\omega\sp\ast\bmupxi),
$$
$$
\eubD\sb m
=
\sum\limits\sb{j=1}\sp{3}\eubJ\bmupalpha\sp j\p\sb j+\bmupbeta m.
$$
The operators
$\eubD\sb m$ and $\eubL(\omega)$
considered on the domain $H^1(\R^3,\C^8)$
are self-adjoint.

Theorem~\ref{theorem-52}
is the immediate consequence of the following lemma.

\begin{lemma}\label{lemma-lambda-small}
Let
$\omega\sb k\in(0,m)$,
$k\in\N$;
$\omega\sb k\to m$
as $k\to\infty$.
Then there is no sequence
$\lambda\sb k\in\sigma\sb p(\eubJ\eubL(\omega\sb k))$
with
$\Re\lambda\sb k\ne 0$.
\end{lemma}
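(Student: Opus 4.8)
The plan is to argue by contradiction: rescale \eqref{it-it-it} to the linearization of the Choquard equation \eqref{pc} at its ground state $u$, use the Vakhitov--Kolokolov analysis of the Appendix together with the deformation result of \cite{dirac-spectrum} to conclude that the eigenvalue families of $\eubJ\eubL(\omega)$ emanating from the imaginary axis stay there for $\omega$ near $m$, and exclude eigenvalues that escape towards the essential spectrum by means of the limiting absorption principle for the free Dirac operator $\eubD\sb m$. So suppose, to the contrary, that there is $\lambda\sb k\in\sigma\sb p(\eubJ\eubL(\omega\sb k))$ with $\Re\lambda\sb k\ne 0$ and eigenvectors $\bmupxi\sb k\in H^1(\R^3,\C^8)$ normalized by $\Norm{\bmupxi\sb k}\sb{L^2}=1$. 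The Coulomb terms $\Delta^{-1}\abs{\bmupphi\sb\omega}^2$ and $2\bmupphi\sb\omega\Delta^{-1}(\bmupphi\sb\omega\sp\ast\,\cdot\,)$ are bounded with localized coefficients, hence relatively compact with respect to $\eubJ\eubD\sb m$, so $\sigma\sb{\mathrm{ess}}(\eubJ\eubL(\omega\sb k))=\sigma\sb{\mathrm{ess}}(\eubJ(\eubD\sb m-\omega\sb k))=i\R\setminus i\big({-}(m-\omega\sb k),\,m-\omega\sb k\big)$, and each $\lambda\sb k$ is an isolated eigenvalue of finite multiplicity off the imaginary axis. Since $\phi\sb\omega=O(\epsilon^2)$ with $\epsilon=\sqrt{m^2-\omega^2}$ (see \eqref{phi-asymptotics}), the Coulomb perturbation $W\sb{\omega\sb k}:=\eubL(\omega\sb k)-(\eubD\sb m-\omega\sb k)$ tends to zero in operator norm as $\omega\sb k\to m$, and, being a bounded perturbation of an operator with purely imaginary spectrum, it forces $\abs{\Re\lambda\sb k}\le\Norm{W\sb{\omega\sb k}}\to 0$. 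Put $\epsilon\sb k=\sqrt{m^2-\omega\sb k^2}$ and $\Lambda\sb k=\lambda\sb k/\epsilon\sb k^2$; after passing to a subsequence, either (i) $\Lambda\sb k$ is bounded and stays a positive distance from the essential spectrum of the rescaled (Choquard) linearization, or (ii) it is not --- it either diverges or approaches that essential spectrum.

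In case (i) I would split $\bmupxi\sb k$ into ``electron'' and ``positron'' blocks and rescale $y=\epsilon\sb k x$ exactly as in the derivation of \eqref{sys-phi1}--\eqref{sys-phi2}, so that \eqref{it-it-it} becomes, modulo terms $O(\epsilon\sb k^2)$, the eigenvalue problem at spectral parameter $\Lambda\sb k$ for the linearization of \eqref{pc} at $u$. Using elliptic regularity, the uniform exponential decay of $\phi\sb{\omega\sb k}$ and the compactness of $\Delta\sb y^{-1}$ on weighted spaces, one extracts along a subsequence a nonzero $L^2$ limit solving the Choquard linearization with eigenvalue $\Lambda\sb\infty=\lim\Lambda\sb k$; ruling out concentration or spreading of the rescaled eigenvector, which would annihilate the limit, is part of this step and rests on the a priori decay bounds. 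Since the discrete spectrum of the ground-state Choquard linearization is $\{0\}$, we get $\Lambda\sb\infty=0$, so $\lambda\sb k$ belongs to an eigenvalue family of $\eubJ\eubL(\omega)$ emanating from the origin; by \cite{dirac-spectrum} such families are deformations of the corresponding families of the Choquard linearization, and by the Vakhitov--Kolokolov computation of the Appendix --- $\frac{d}{d\omega}\!\int\abs{\phi\sb\omega}^2\,d^3x$ having the sign required by the criterion --- the latter stay on the imaginary axis. This contradicts $\Re\lambda\sb k\ne 0$.

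In case (ii), $\lambda\sb k$ approaches the continuous spectrum of $\eubJ\eubL(\omega\sb k)$ faster than the spectral gap $i\big({-}(m-\omega\sb k),\,m-\omega\sb k\big)\sim i\big({-}\tfrac{\epsilon\sb k^2}{2m},\,\tfrac{\epsilon\sb k^2}{2m}\big)$ closes: $\lambda\sb k$ would be an eigenvalue bifurcating from the continuous spectrum, from the collapsing inner thresholds $\pm i(m-\omega\sb k)$, from the outer thresholds $\pm i(m+\omega\sb k)\to\pm 2im$, or from a regular point of $\sigma\sb{\mathrm{ess}}$ (possibly at high energy). I would rewrite the eigenvalue equation through the resolvent of $\eubJ\eubD\sb m$, namely $\bmupxi\sb k=(\eubJ\eubD\sb m-\lambda\sb k)^{-1}\eubJ W\sb{\omega\sb k}\bmupxi\sb k$, and apply the limiting absorption principle for $\eubD\sb m$, which furnishes resolvent bounds in suitable weighted $L^2$ spaces that are uniform as the spectral parameter tends to the essential spectrum from $\Re z\ne 0$, staying away from the thresholds, and also at high energy. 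Since $W\sb{\omega\sb k}\to 0$ --- its short-range part carrying exponential decay, its long-range Coulomb tail being of size $O(\epsilon\sb k^2)$ --- this forces $\bmupxi\sb k\to 0$, contradicting the normalization. The thresholds themselves are regular, the free Dirac operator in $\R^3$ having there neither eigenvalues nor resonances, which is what lets the principle be applied up to the edges of the essential spectrum.

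The hard part will be case (ii) --- the absence of bifurcation of eigenvalues from the continuous spectrum --- and, within it, the behaviour at the inner thresholds: as $\omega\sb k\to m$ the thresholds $\pm i(m-\omega\sb k)$ collapse onto the origin, so the limiting absorption principle must be applied with constants uniform in $\omega\sb k$ near a moving threshold, and the absence of threshold resonances for $\eubD\sb m$ must be exploited quantitatively; this is precisely where the explicit structure of the free Dirac resolvent in three dimensions is needed.
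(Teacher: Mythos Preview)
Your proposal is correct and follows essentially the same route as the paper's proof: both reduce, via the limiting absorption principle for the free Dirac operator, to showing that $\lambda\sb k=O(m-\omega\sb k)$ and then pass to the rescaled Choquard linearization where the Vakhitov--Kolokolov criterion and the deformation result of \cite{dirac-spectrum} take over. The paper organizes the reduction as ``first localize accumulation points to $\{0,\pm 2mi\}$, then exclude $\pm 2mi$, then prove $\lambda\sb k=O(m-\omega\sb k)$'', while you split instead according to whether $\Lambda\sb k=\lambda\sb k/\epsilon\sb k^2$ stays bounded; these are equivalent reorganizations, and you correctly identify the collapsing inner thresholds as the delicate point (exactly the step the paper defers to \cite{dirac-spectrum}).

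One small caveat: your assertion that ``the discrete spectrum of the ground-state Choquard linearization is $\{0\}$'' is not established here (the Appendix only gives $\sigma(\eub{j}\eub{l})\subset i\R$), and is not needed --- your argument only requires $\Lambda\sb\infty\in i\R$, which already follows from Lemma~\ref{lemma-61}, together with the deformation statement. Drop or weaken that claim and the argument is clean.
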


\begin{proof}
The lemma is proved in several steps,
which we now sketch;
more details will appear in
\cite{dirac-spectrum}.
First, one shows that if there were a sequence
of eigenvalues
$\lambda\sb k\in\sigma\sb p(\eubJ\eubL(\omega\sb k))$
such that $\lim\limits\sb{k\to\infty}\lambda\sb k$
existed,
then we would have
$$
\lim\sb{k\to\infty}\lambda\sb k\subset\{0,\pm 2 m i\}.
$$
The proof of this statement follows from
the fact that in the limit $\omega\to m$,
as $\norm{\phi\sb\omega}\sb{L\sp\infty}\to 0$
(Cf. \eqref{phi-asymptotics}),
the operator
$\eubJ\eubL(\omega)$ turns into $\eubJ(\eubD\sb m-m)$.
According to \cite{MR0320547},
there is the limiting absorption principle
for the free Dirac operator $\eubD\sb m$;
its resolvent, $(\eubD\sb m-z)^{-1}$,
is uniformly bounded
from
$L\sp 2\sb{s}(\R^3,\C^4)$
to $L\sp 2\sb{-s}(\R^3,\C^4)$,
for any $s>1/2$ and
uniformly for $\abs{\Re z}>m+\delta$
(for any fixed $\delta>0$)
and $\Im z\ne 0$.
(Recall that
$L^2\sb{s}(\R^n)=\{u\in L^2\sb{loc}(\R^n)
\sothat
\norm{u}\sb{L^2\sb s}^2
:=
\int\sb{\R^n}(1+x^2)^s\abs{u(x)}^2\,d^n x\}<\infty$.)
This implies that the
resolvent of
$\eubJ\eubL(\omega)$
is bounded in these weighted spaces
outside of the union of $i\R$
with open neighborhoods of ``thresholds'' $\lambda=0$ and $\lambda=\pm 2m i$,
as long as $\omega$ is sufficiently close to $m$.
In turn, this implies that as $\omega\sb k\to m$,
the eigenvalues $\lambda\sb k$ can not accumulate
but to these three threshold points.

Further,
the eigenvalues $\lambda\sb k$ with $\Re\lambda\sb k\ne 0$
can not accumulate to $\pm 2 m i$.
This follows  from the fact that
if $\lambda\sb k\to\lambda\sb b\in i\R\backslash 0$
as $\omega\sb k\to\omega\sb b\in i\R\backslash 0$,
then $\lambda\sb b$ itself has to belong to the point spectrum
of $\eubJ\eubL(\omega\sb b)$
(corresponds to the $L\sp 2$ eigenfunction);
this result is again based on the
limiting absorption principle.
At the same time, there can be no $L\sp 2$ eigenfunctions
of a constant coefficient operator $\eubJ(\eubD\sb m-m)$.

Finally,
one has to study the most involved case
$\lambda\sb k\to 0$.
One first proves that
if $\lambda\sb k\to 0$
and $\Re\lambda\sb k\ne 0$
as $\omega\sb k\to m$,
then necessarily $\lambda\sb k=O(m-\omega\sb k)$.
Then one studies the rescaled equation.
The conclusion is that
the families of eigenvalues for the
linearization of the Dirac--Choquard equation,
$\lambda\sb k\in\sigma\sb p(\eubJ\eubL(\omega\sb k))$,
are deformations
of families of eigenvalues for the linearization of the Choquard equation,
which is a nonrelativistic limit of the Dirac--Choquard equation;
in the context of the nonlinear Dirac equation,
this has been rigorously done in \cite{dirac-spectrum}.
The presence of eigenvalues with nonzero real part
in the linearization of Choquard equation
is controlled by the
Vakhitov--Kolokolov stability criterion \cite{VaKo};
for the linearization at
no-node solutions,
this criterion prohibits existence
of such eigenvalues.
This finishes the proof of the lemma.
\end{proof}

We reproduce the Vakhitov--Kolokolov stability criterion \cite{VaKo}
in application to the Choquard equation in the Appendix
(see Lemma~\ref{lemma-61} below).

\section{Conclusions}
\label{sectconcl}

In the present paper we considered solitary waves
in the system of Dirac fermions interacting via the Coulomb
attraction from
both Physics and Mathematics viewpoints.
On the physical side the solitary waves describe polarons
that may appear in two situations.
The first one corresponds to certain condensed matter systems
in which massive Dirac fermions interact with optical phonons.
The polarons in the system of true relativistic Dirac fermions
interacting with gravity in the Newtonian limit
are also described by the above-mentioned solitary waves.

A possible application of our construction
for the gravitational case is related to the situation
when the gravitational interaction between elementary particles is strong enough.
The corresponding problem may only appear in the models that rely on quantum gravity.
It is worth mentioning here that the role of the gravitational interaction
may be played by the emergent gravity \cite{MR1989766,unruh2010quantum-volovik}
with the scale much lower than the Planck mass.
Such models may be relevant
for the description of the TeV-scale physics \cite{2012arXiv1209.0204V}.

On the mathematical side we develop analytical methods
for the investigation of solitary waves.
These methods are based on the   observation that these localized solutions
are obtained as a bifurcation from the solitary waves
of the Choquard equation.
Basing on this approach, we demonstrate that the no-node gap solitons
for sufficiently small values of $\charge$ are linearly stable.

It is worth mentioning that the solitary waves similar to the considered in the
present paper may also exist in two-dimensional systems like the boundary of
the topological insulators or the graphene. This may occur if the interaction
between the electrons of the 2D system with the balk phonons (with the
substrate phonons in the case of graphene) is strong enough. We postpone the
consideration of the corresponding 2D solitary waves to future publications.

\medskip

\noindent
ACKNOWLEDGMENTS.
The authors would like to
kindly acknowledge the private communication with G.E.\,Volovik,
who prompted to consider the relation of the polaron problem
to the solitary waves in the Dirac--Coulomb system.
A.C. benefited from discussions
with Gregory Berkolaiko, Nabile Boussa\"id, and Boris Vainberg.

The work of M.A.Z. was partially supported by RFBR grant 11-02-01227,  by
the Federal Special-Purpose Programme ``Human Capital''
of the Russian Ministry of Science and Education, and by the Federal
Special-Purpose Programme 07.514.12.4028.

The authors are grateful to August Krueger for his advice and corrections.

\appendix

\section{
Vakhitov--Kolokolov criterion
for the Choquard equation}

We consider the Choquard equation,
\begin{equation}\label{choquard}
i\p\sb t\zeta=-\frac{1}{2m}\Delta\zeta
+m\zeta
+\zeta\Delta^{-1}\abs{\zeta}^2,
\end{equation}
where $\zeta(x,t)\in\C^3$
and $x\in\R^3$.
We are interested in the solitary wave solutions
$
\zeta(x,t)=u\sb\omega(x)e^{-i\omega t},
$
$\omega\in\R$;
$u\sb\omega$
satisfies
\begin{equation}\label{choquard-st}
-(m-\omega)u\sb\omega
=-\frac{1}{2m}\Delta u\sb\omega
+u\sb\omega\Delta^{-1}u\sb\omega^2.
\end{equation}
%
Given a solution to \eqref{pc},
$$
-\frac{1}{2m}u
=-\frac{1}{2m}\Delta u+u\Delta^{-1}u^2,
$$
then, for any $\omega<m$,
the profiles
\begin{equation}\label{scal}
u\sb\omega(x)=2m(m-\omega)
\,u\big(x\sqrt{2m(m-\omega)}\,\big)
\end{equation}
correspond to a family of solitary wave solutions to
\eqref{choquard-st}.
Note that this scaling is the same as that of
$\phi\sb e$ in
\eqref{phi-asymptotics}.

\begin{lemma}\label{lemma-61}
For $\omega<m$,
the no-node solitary wave solutions
$\zeta\sb\omega(x,t)=u\sb\omega(x)e^{-i\omega t}$
to the Choquard system
are linearly stable.
\end{lemma}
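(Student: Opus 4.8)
The plan is to carry out the classical Vakhitov--Kolokolov analysis \cite{VaKo} in the concrete setting of the Choquard equation \eqref{choquard}. First I would linearize at the solitary wave: writing $\zeta(x,t)=(u_\omega(x)+p(x,t)+iq(x,t))e^{-i\omega t}$ with $u_\omega$ real-valued and $p,q$ real, and discarding terms quadratic in $(p,q)$, the linearized flow decouples as $\dot p=L_- q$, $\dot q=-L_+ p$, where
$$
L_-=-\frac{1}{2m}\Delta+(m-\omega)+\Delta^{-1}u_\omega^2,
\qquad
L_+=L_-+2u_\omega\Delta^{-1}(u_\omega\,\cdot\,),
$$
both self-adjoint on $H^2(\R^3,\R)$; here one uses that a real profile makes the Hartree term split into an ``amplitude'' piece acting only on $p$. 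Looking for $(p,q)=(\xi_1,\xi_2)e^{\lambda t}$ reduces linear stability to the statement that the eigenvalue relation $\lambda^2\xi_1=-L_- L_+\xi_1$ forces $\Re\lambda=0$; equivalently, the linearized operator $\begin{pmatrix}0 & L_-\\ -L_+ & 0\end{pmatrix}$ has no eigenvalue off the imaginary axis.

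The second step is to record the spectral structure of $L_\pm$. Since $u_\omega>0$ and $L_- u_\omega=0$ by \eqref{choquard-st}, $u_\omega$ is the ground state of the Schr\"odinger-type operator $L_-$, so $L_-\ge 0$ with $\ker L_-=\R u_\omega$ and $\sigma_{\mathrm{ess}}(L_-)=[m-\omega,\infty)$ (the potential $\Delta^{-1}u_\omega^2$ is bounded and decays, hence relatively compact with respect to $-\frac{1}{2m}\Delta$). The correction $2u_\omega\Delta^{-1}(u_\omega\,\cdot\,)$ is a nonpositive form since $\Delta^{-1}$ is a negative operator on $\R^3$, and $\langle L_+ u_\omega,u_\omega\rangle=2\langle\Delta^{-1}u_\omega^2,u_\omega^2\rangle<0$, so $L_+$ has at least one negative direction; using that $u_\omega$ is nodeless together with the (known) non-degeneracy of the Choquard ground state, one gets that the Morse index of $L_+$ is exactly one and $\ker L_+=\mathrm{span}\{\partial_1 u_\omega,\partial_2 u_\omega,\partial_3 u_\omega\}$, the latter obtained by differentiating \eqref{choquard-st} in $x_j$. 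Differentiating \eqref{choquard-st} in $\omega$ gives $L_+(\partial_\omega u_\omega)=u_\omega$, whence the Vakhitov--Kolokolov quantity is
$$
\langle L_+^{-1}u_\omega,u_\omega\rangle=\langle\partial_\omega u_\omega,u_\omega\rangle=\tfrac12\,\frac{d}{d\omega}\norm{u_\omega}_{L^2}^2 .
$$

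The third step is the sign computation, immediate from the scaling \eqref{scal}: $\norm{u_\omega}_{L^2}^2=\big(2m(m-\omega)\big)^{1/2}\norm{u}_{L^2}^2$ is strictly decreasing in $\omega$ on $(-\infty,m)$, so $\frac{d}{d\omega}\norm{u_\omega}_{L^2}^2<0$ for every $\omega<m$ --- the Vakhitov--Kolokolov slope has the ``stable'' sign along the whole family. Feeding $n(L_-)=0$, $n(L_+)=1$, the kernels above, and the negative slope into the standard spectral-stability count (the Hamiltonian--Krein index identity for the block operator $\begin{pmatrix}0 & L_-\\ -L_+ & 0\end{pmatrix}$), one concludes that the number of eigenvalues with positive real part is zero, which is the asserted linear stability. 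The hard part is not this sign computation but the exclusion of \emph{complex} eigenvalues, i.e. Hamiltonian--Krein quadruplets leaving the imaginary axis: the slope $\frac{d}{d\omega}\norm{u_\omega}_{L^2}^2$ controls the real eigenvalues directly, but ruling out quadruplets requires the full negative-index identity, which rests on the precise descriptions $\ker L_-=\R u_\omega$ and $\ker L_+=\mathrm{span}\{\partial_j u_\omega\}$ --- and hence on the non-degeneracy of the Choquard ground state, the one genuinely non-elementary input. A minor additional point is that, despite the nonlocal Hartree term, $L_\pm$ differ from $-\frac{1}{2m}\Delta$ by relatively compact operators (as $\Delta^{-1}\colon L^2\to H^2$ and both $u_\omega$ and $\Delta^{-1}u_\omega^2$ decay), so that $\sigma_{\mathrm{ess}}(L_\pm)=[m-\omega,\infty)$ and the abstract stability machinery applies verbatim.
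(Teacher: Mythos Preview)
Your proposal is correct and takes essentially the same Vakhitov--Kolokolov route as the paper: both linearize to the self-adjoint pair $L_\pm$ (the paper writes $L_0,L_1$), establish $L_-\ge 0$ with kernel $\R u_\omega$, Morse index one for $L_+$ with translational kernel, and read off $dQ/d\omega<0$ from the scaling \eqref{scal}. The only difference is packaging: you invoke the Hamiltonian--Krein index identity as a black box to conclude, whereas the paper carries out the original hands-on VK step, deriving $-\lambda^2\langle R,L_0^{-1}R\rangle=\langle R,L_1 R\rangle$ (hence $\lambda^2\in\R$) and then proving $L_1|_{\{u_\omega\}^\perp}\ge 0$ directly via the monotone resolvent function $z\mapsto\langle u_\omega,(L_1-z)^{-1}u_\omega\rangle$.
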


The linear stability of no-node solitary waves
of the Choquard equation
follows from the Vakhitov--Kolokolov stability criterion
\cite{VaKo}
which is applicable to systems of the Schr\"odinger type.
It also follows from \cite{MR677997}
(where the orbital stability of these solitary waves is proved).
Let us sketch the argument.
First, we notice that, by \eqref{scal},
the charge of the solitary wave
$u\sb\omega(x)e^{-i\omega t}$ is given by
$$
Q(\phi_\omega) =\int\sb{\R^3}\abs{\phi_\omega(x)}^2\,d^3x
\sim
(m-\omega)^{1/2},
\qquad
\omega\lesssim m;
$$
therefore,
\begin{equation}\label{neg}
\frac{d Q(u\sb\omega)}{d\omega}<0,
\qquad
\omega<m.
\end{equation}
We consider the solution to the
Choquard equation in the form
of a perturbed solitary wave,
$$
\zeta(x,t)=(u\sb\omega(x)+R(x,t)+i S(x,t))e^{-i\omega t},
$$
with $R$, $S$ real-valued.
The linearized equation on $R$, $S$
is given by
\begin{equation}
\p\sb t
\begin{bmatrix}R\\S\end{bmatrix}
=
\eub{j}\eub{l}(\omega)
\begin{bmatrix}R\\S\end{bmatrix},
\end{equation}
where
\begin{equation}
\eub{j}
=
\begin{bmatrix}0&1\\-1&0\end{bmatrix},
\qquad
\eub{l}(\omega)=
\begin{bmatrix}L\sb 1(\omega)&0\\0&L\sb 0(\omega)\end{bmatrix},
\end{equation}
and
$$
L\sb 0(\omega)=-\frac{1}{2m}\Delta+m-\omega
+\Delta^{-1}u\sb\omega^2,
$$
$$
L\sb 1(\omega)=L\sb 0(\omega)+2\Delta^{-1}(u\sb\omega\,\cdot\,)u\sb\omega.
$$
Both operators $L\sb 0(\omega)$ and $L\sb 1(\omega)$ are self-adjoint,
with
$\sigma\sb{ess}(L\sb 0(\omega))
=\sigma\sb{ess}(L\sb 1(\omega))
=[m-\omega,+\infty)$.
Clearly, $L\sb 0(\omega) u\sb\omega=0$,
with $0\in\sigma\sb d(L\sb 0)$
an eigenvalue
corresponding to a positive eigenfunction
$u\sb\omega$;
it follows that $0$ is a simple eigenvalue of $L\sb 0$,
with the rest of the spectrum separated from zero.
Taking the derivatives of the equality
$L\sb 0(\omega)u\sb\omega=0$
with respect to $x\sb j$ and $\omega$,
we get:
\begin{equation}\label{wg}
L\sb 1(\omega)\p\sb{x\sb j}u\sb\omega=0,
\qquad
L\sb 1(\omega)\p\sb\omega u\sb\omega=u\sb\omega.
\end{equation}
The first relation shows that $\lambda\sb 1=0$
is the point eigenvalue of $L\sb 1(\omega)$,
and since $\p\sb{x\sb j}u\sb\omega$
vanishes on a hyperplane $x\sb j=0$,
there is one negative eigenvalue $\lambda\sb 0<0$
of $L\sb 1(\omega)$.

Now we may determine the spectrum of
$\eub{j}\eub{l}(\omega)
=\begin{bmatrix}0&L\sb 0(\omega)\\-L\sb 1(\omega)&0\end{bmatrix}$.
We closely follow \cite{VaKo}.
If $\begin{bmatrix}R\\S\end{bmatrix}$ is an eigenfunction
corresponding to the eigenvalue $\lambda\in\C$,
then $-\lambda^2 R=L\sb 0 L\sb 1 R$.
If $\lambda\ne 0$,
then one concludes that
$R$ is orthogonal to $\ker L\sb 0=\mathop{\rm Span}(u\sb\omega)$,
hence we can apply $L\sb 0^{-1}$;
taking then the inner product with $u\sb\omega$,
we have:
\begin{equation}\label{minus-lambda}
-\lambda^2
\langle R,L\sb 0^{-1}R\rangle
=
\langle R,L\sb 1 R\rangle.
\end{equation}
With $L\sb 0$, $L\sb 1$
being self-adjoint,
this relation implies that $\lambda^2\in\R$.
Since $L\sb 0(\omega)$
is non-negative and $R\perp\ker L\sb 0$,
one has
$\langle u\sb\omega,L\sb 0\sp{-1}u\sb\omega\rangle>0$.
The solution to
$$
\mu:=\inf\big\{\langle R,L\sb 1(\omega),R\rangle\sothat
\norm{R}=1,\ \langle u\sb\omega,R\rangle=0\big\}
$$
satisfies
$L\sb 1(\omega) R=\mu R+\nu u\sb\omega$,
where $\mu$, $\nu\in\R$ play the role of the Lagrange multipliers.
Due to the condition $\langle u\sb\omega,R\rangle=0$,
$\mu$ delivers the zero value to the function
$$
f(z)=\langle u\sb\omega,(L\sb 1(\omega)-z)^{-1}u\sb\omega\rangle,
\qquad
z\in\rho(L\sb 1(\omega)),
$$
with $\rho(L\sb 1)$
denoting the resolvent set of $L\sb 1$.
Since $\ker L\sb 1$ is spanned by
$\p\sb j u\sb\omega$
and therefore is orthogonal to $u\sb\omega$,
we can extend $f(z)$ to
$z\in(\lambda\sb 0,\lambda\sb 2)$,
where
$\lambda\sb 0=\inf \sigma(L\sb 1(\omega))<0$
and $\lambda\sb 2$ is the smallest positive eigenvalue of $L\sb 1$
in the interval $(0,m-\omega)$
(or the edge of the essential spectrum, $\lambda=m-\omega$).
We need to know whether $\mu$ is positive or negative.
Since $f'(z)>0$,
the sign of $\mu$ is opposite to
$$
f(0)
=\langle u\sb\omega,L\sb 1(\omega)\sp{-1}u\sb\omega\rangle
=\langle u\sb\omega,\p\sb\omega u\sb\omega\rangle
=\frac{\p\sb\omega Q(u\sb\omega)}{2}.
$$
In the second equality,
we used the second relation from \eqref{wg}.
  From \eqref{neg},
we conclude that
$f(0)<0$;
thus, $\mu>0$.
By \eqref{minus-lambda},
$\lambda^2\le 0$,
leading to
$\spec(\eub{j}\eub{l})\subset i\R$.

\medskip

This shows that there are no families of eigenvalues of
$\eub{j}\eub{l}(\omega)$ with nonzero real part
bifurcating from $\lambda=0$ at $\omega=m$.
Since bifurcations of eigenvalues from $\lambda=0$
for the linearizations
of the Choquard equation
and
the Dirac--Choquard equation \eqref{2dd-1}
(which is equivalent to the Dirac--Coulomb system)
have the same asymptotics as $\omega\to m$,
we conclude that
neither are there families of eigenvalues
of $\eubJ\eubL(\omega)$ with nonzero real part.

\begin{remark}
The rigorous proof of linear stability of solitary wave solutions
to the Dirac--Choquard equation
requires a more detailed analysis of the spectrum of $\eub{j}\eub{l}(\omega)$.
Namely, one needs to know whether
there are resonances or embedded eigenvalues.
Theoretically, resonances or
embedded eigenvalues of higher algebraic multiplicity
could bifurcate off the imaginary axis into the complex domain,
yielding a family of eigenvalues
$\lambda\sb k\in\sigma\sb{p}(\eubJ\eubL(\omega\sb k))$
with $\omega\sb k\to m$, $\lambda\sb k=O(m-\omega\sb k)$,
$\Re\lambda\sb k\ne 0$
(and resulting in the instability of Dirac--Choquard solitary waves),
although we expect that generically this does not happen.
\end{remark}

\notyet{

\subsection{Numerical construction
of solitary wave solutions
to the Choquard equation}

Numerically, the spherically symmetric solutions
to the Choquard equation
could be obtained as follows.
We take
$$
m=1.
$$
We denote $W(r)=-v(r)+E$
and rewrite
\eqref{lambda-phi}
as the first order system:
$$
\Pi(r)=u',
\qquad
0=-\frac 1 2(\Pi'+2\Pi)-W u,
\qquad
Z(r)=W'(r),
\qquad
-(Z'+2Z)=u^2.
$$
Without loss of generality,
we can assume that $u(0)=1$.
The initial data are
$u(0)=1$,
$\Pi(0)=u'(0)=0$,
$Z(0)=W'(0)=0$.
The value of $W(0)$ is being chosen so that
$\lim\sb{r\to\infty}u(r)=0$.
The solutions were obtained
via the shooting method,
on the interval $0\le r\le R:\approx 34\approx \lg(1/\hbar)$.
Once we find the solution $u$,
we compute its approximate total charge,
$Q=\int_0^R u^2(r)r^2\,dr$.
We take $E<0$ so that
$E+Q/r$
matches $W(r)$ for large $r$;
that is, we take
\begin{equation}\label{eis}
E=W(R)-\frac Q R.
\end{equation}
See Figure~\ref{ns-nodes}.
Here are the first several solutions.

\begin{verse}
\noindent
No nodes ($N=0$):
$u(0) =  1$,
$W(0) \approx  0.650$,
$Q_0 \approx  2.066$,
$E_0 \approx -0.692$.

\noindent
One node ($N=1$):
$u(0) =  1$,
$W(0) \approx  0.856$,
$Q_1 \approx  4.587$,
$E_1 \approx -0.648$.

\noindent
Two nodes ($N=2$):
$u(0) =  1$,
$W(0) \approx  0.950$,
$Q_2 \approx  7.098$,
$E_2 \approx -0.631$.

\noindent
Three nodes ($N=3$):
$u(0) =  1$,
$W(0) \approx  1.001$,
$Q_3 \approx  9.592$,
$E_3 \approx -0.621$.

\noindent
Four nodes $(N=4)$:
$u(0) =  1$,
$W(0) \approx  1.053$,
$Q_4 \approx  12.077$,
$E_4 \approx -0.614$.
\end{verse}


\begin{figure}[htbp]
\begin{center}
includegraphics[width=15cm,height=4cm]{dirac-maxwell-3d-00}

includegraphics[width=15cm,height=4cm]{dirac-maxwell-3d-01}

includegraphics[width=15cm,height=4cm]{dirac-maxwell-3d-02}

includegraphics[width=15cm,height=4cm]{dirac-maxwell-3d-03}

includegraphics[width=15cm,height=4cm]{dirac-maxwell-3d-04}
\end{center}

\caption{\footnotesize
Solution pairs $(u(r),W(r))$,
with
$u$ (bold lines)
having the number of nodes $N=0,\,1,\,2,\,3,\,4$.
The marks ``$+$''
denote the shape of the function
$E+Q/r$,
which is to coincide with the asymptotic
behavior of the potential $W(r)$
(plotted by thin lines)
for large $r$.
Matching these asymptotics
allows us to find the value of $E$
(Cf. \eqref{eis}).
}
\label{ns-nodes}
\end{figure}

}

\bibliography{all,zubkov}
\bibliographystyle{iopart-numa}

\def\cprime{$'$} \def\polhk#1{\setbox0=\hbox{#1}{\ooalign{\hidewidth
  \lower1.5ex\hbox{`}\hidewidth\crcr\unhbox0}}}
\providecommand{\newblock}{}

\end{document}